\documentclass[12pt,reqno]{amsart}

\usepackage{amsmath,amssymb,mathrsfs}
\usepackage{amsthm}

\newtheorem{assumption}{Assumption}
\usepackage{graphicx,cite,times}
\usepackage{cases}
\usepackage{color}
\usepackage{appendix}
\usepackage{enumerate}
\usepackage{bm} 
\usepackage{ulem}

\usepackage{mathtools}
\usepackage[ruled]{algorithm2e}
\usepackage{soul}
\usepackage{verbatim}

\usepackage[colorlinks, linkcolor=blue, citecolor=blue]{hyperref}

\usepackage{cite}
\graphicspath{{figures/}}
\usepackage{epstopdf}
\usepackage{rotating}
\usepackage{algorithmic}
\usepackage[pagewise]{lineno}

\newtheorem{theorem}{Theorem}[section]
\newtheorem{corollary}[theorem]{Corollary}
\newtheorem{lemma}[theorem]{Lemma}
\newtheorem{proposition}[theorem]{Proposition}
\newtheorem{remark}[theorem]{Remark}
\newtheorem{definition}[theorem]{Definition}

\numberwithin{equation}{section}

\begin{document}

\title[Vertex centrality reconstruction in an inverse problem for information diffusion]{Vertex centrality reconstruction in an inverse problem for information diffusion}

\author{Yixian Gao}
\address{School of Mathematics and Statistics, and Center for Mathematics and Interdisciplinary Sciences, Northeast Normal University, Changchun, Jilin 130024, China}
\email{gaoyx643@nenu.edu.cn}

\author{Songshuo Li}
\address{School of Mathematics and Statistics, and Center for Mathematics and Interdisciplinary Sciences, Northeast Normal University, Changchun, Jilin 130024, China}
\email{liss342@nenu.edu.cn}

\author{Yang Yang}
\address{Department of Computational Mathematics Science and Engineering, Michigan State University, East Lansing, MI 48824, USA}
\email{yangy5@msu.edu}

\thanks{The research of Y. Gao was supported by NSFC grant 12371187 and NSF of Jilin Province (Outstanding Youth Fund) 20240101006JJ. The research of Y. Yang is partially supported by the NSF grants DMS-2237534 and DMS-2220373.
}

\keywords{vertex centrality reconstruction, distributions of the first passage times, random walk, graphs}

\begin{abstract}
We consider an inverse problem in information diffusion modeled by random walks on combinatorial graphs. The problem concerns reconstruction of vertex centrality from the distribution of the first passage times observed on a subset of vertices. We adapt the boundary control method to obtain a direct algorithm that computes the unobserved vertex centrality. The algorithm is numerically implemented and validated on small graphs.
\end{abstract}

\maketitle

\section{introduction}

\subsection{Background}
Information diffusion studies spatiotemporal propagation of information in a social system. Here, ``information'' is interpreted in the broad sense such as infectious diseases, news, or ideas. Examples of information diffusion include the spread of infectious diseases through contact networks, the dissemination of news or opinions on social media platforms, and the propagation of signals or activities in biological or technological networks. In this paper, we study the mathematical modeling of information diffusion on a network from the perspective of inverse problems. We are interested in explicitly reconstructing certain network parameters from partial observation on a sub-network.

Let us begin by setting up the model. 
A network on which information propagates is modeled mathematically as a combinatorial graph, which consists of a set $X$ of vertices and a set $\mathcal{E}$ of edges. In information diffusion, $X$ represents the collection of individuals or agents, and $\mathcal{E}$ represents interactions between them. If two vertices $x,y\in X$ are connected by an edge, we write $x\sim y$, and represent the edge by $\{x, y\}$. In this paper, we focus on a special class of graphs. A graph is called \textit{finite} if $|X|$ and $|\mathcal{E}|$ are both finite, where $|\cdot|$ denotes the cardinality of a set. A graph is called \textit{undirected} if $\{x, y\} = \{y,x\}$ for all $x,y\in X$. A graph is called \textit{simple} if it has no self-loops and there is at most one edge between any pair of vertices. A graph is called \textit{connected} if any two vertices can be connected by a sequence of edges. Henceforth, all the graphs in this paper are assumed to be finite, undirected, simple and connected.

We assign some weights on the vertices and edges. The weight on vertices is defined by the function 
$$
\mu: X\to\mathbb{R}_{>0}
$$ 
where $\mathbb{R}_{>0}$ denotes the set of positive real numbers. For each $x\in X$, the value $\mu(x)$ is called the \textit{vertex centrality}, and characterizes the influence of the vertex $x$ in the network. 
The weight on the edges is defined by a function 
$$
w: X\times X\longrightarrow \mathbb{R}_{\ge 0},
$$ 
satisfying $w(x,y)=w(y,x)$, where $\mathbb{R}_{\ge 0}$ denotes the set of non-negative real numbers. For each pair of vertices $x,y\in X$, the value $w(x,y)$ is called the \textit{edge weight}, and characterizes the strength of the connection between $x$ and $y$. We require that $w(x,y)>0$ for all $x\sim y$, and $w(x,y) = 0$ for $x\nsim y$. 
Here, $\mathcal{E}=\big\{\{x,y\}\in X\times X~|~w(x,y)>0\big\}$. Throughout the paper, we will denote the graph with these weight functions by the quadruple $\mathbb{G}:= (X,\mathcal{E},\mu,w)$. For the brevity of notation, we often write $\mu(x)$ as $\mu_x$, and $w(x,y)$ as $w_{x,y}$.

The dynamics in information diffusion can be modeled in various ways. In this paper, we adopt a widely-used model where the propagation of information is regarded as a diffusion process (for example, see 
\cite{Datta_Dorlas_2004,Christley2005Infection,Draief2011RandomWalk}). To study the diffusion process on $\mathbb{G}$, we introduce the \textit{graph Laplacian} $\Delta$ whose action on a real-valued function $u:X \longrightarrow \mathbb{R}$ is defined by
\begin{equation*}
\Delta u(x):= \frac{1}{\mu(x)}\sum_{\substack{y\in X\\ y\sim x}}w(x,y)(u(y)-u(x)),\quad x\in X.
\end{equation*}
We remark that two types of graph Laplacian are of vital significance in graph theory:
the combinatorial Laplacian when $\mu \equiv 1$, and the normalized Laplacian when $w \equiv 1$ and $\mu(x) = {\rm deg} (x)$.
We define the discrete first-order time derivative of $u$ as
\begin{align*}
D_{t} u(t,x)= u(t+1,x)-u(t,x),\qquad t = 0,1,2,\dots\text{~and}  ~x\in X.
\end{align*}
If we observe the information diffusion only at discrete times, then its dynamics are governed by the following discrete-time graph heat equation: 
\begin{align*}
D_t u(t,x)=\Delta u(t,x),\quad (t,x)\in \mathbb{N}_0\times X,
\end{align*}
where $\mathbb{N}_0$ denotes the set of non-negative integers. Moreover, if the initial distribution of the information is described by a function $g:X\to\mathbb{R}$, then we can complement the graph heat equation with the initial condition 
$$
u(0, x) = g(x).
$$
The solution $u$ of this initial value problem describes the expected information level at time $t$ and vertex $x$.

Let us introduce notations for some inner products. For a subset of vertices $\Omega\subset X$ and functions $f,h: \Omega\longrightarrow \mathbb{R}$, we denote by $\ell^2(\Omega)$ the $\ell^2$-space of real-valued functions equipped with the inner product:
\begin{equation*}
\langle f,h\rangle_\Omega := \sum\limits_{x\in \Omega}\mu_x f(x) h(x).
\end{equation*}
An important fact, which will be used later, is that the graph Laplacian $\Delta$ is a self-adjoint operator on $\ell^2(X)$, that is, 
$$
(\Delta f,h)_X=(f,\Delta h)_X \qquad \text{ for all~} f,h: X\to \mathbb{R}.
$$
More generally, let us denote $\mathbb{Z}_T :=\{0,1,\dots, T-1\}$ and consider time dependent functions $u, v : \mathbb{Z}_T \times \Omega \to \mathbb{R}$, where $T\in \mathbb{N}_0\setminus \{0\}$. The inner product on the space $\ell^2(\mathbb{Z}_T \times \Omega)$ is defined by
\begin{equation*}
\langle u, v \rangle_{\mathbb{Z}_T \times \Omega} 
\coloneqq \sum_{t=0}^{T-1} \langle u(t), v(t) \rangle_\Omega
= \sum_{t=0}^{T-1} \sum_{x \in \Omega} \mu_x u(t,x) v(t,x).
\end{equation*}

\subsection{Problem Formulation}
It is well known that the discrete-time graph heat equation is closely connected to random walks on graphs. Here, we follow the exposition in~\cite{MR4598377} to describe the relation: Suppose the vertex centrality $\mu$ and edge weight $w$ satisfy $1\geq \frac{1}{\mu_x}\sum\limits_{\substack{y\in X\\ y\sim x}}w_{xy}$ for all $x\in X$. We define
\begin{align}\label{Heat_transition_proba_def}
p_{xy} := \frac{w_{xy}}{\mu_x} \quad\text{for}~ x\sim y, \quad \text{and} \quad p_{xx}:=1-\frac{1}{\mu_x}\sum\limits_{\substack{y\in X\\ y\sim x}}w_{xy}.
\end{align}
Then $\{p_{xy}: y\in X\}$ is a discrete probability distribution for each $x\in X$. Let $H_t^{x_0}$ be the discrete-time random walk with the state space $X$ and one-step transition probability $\{p_{xy}: x,y\in X\}$, which starts at $x_0\in X$ at time $t=0$, then
\begin{align*}
  p_{xy} = \mathbb{P}(H_{t+1}^{x_0}=y~|~H_{t}^{x_0}=x).
\end{align*} 
The random variable $H_t^{x_0}$ represents the vertex at which the information resides at time $t$. 

Moreover, for any function $g: X\rightarrow \mathbb{R}$, the expectation 
\begin{align}\label{u_def_by_expectation}
 u(t,x):=\mathbb{E}(g(H_t^x))=\sum\limits_{y\in X}\mathbb{P}(H_{t}^{x}=y)g(y)
\end{align}
satisfies the following initial value problem for the graph heat equation:
\begin{align}\label{graph_heat_eq}
\begin{cases}
D_t u(t,x) - \Delta u(t,x) = 0, & (t,x)\in \mathbb{N}_0\times X, \\
u(0,x) = g(x), & x\in X.
\end{cases}
\end{align}

Now, we can define the quantity that will be used as the measurement. Fix $x\in X$, and let $H_t^{x}$ be the random walk as above that starts at $x$. We define the \textit{first passage time} for $y\in X$ as
\begin{align*}
    \tau(x,y):=\inf~\{s\geq 1~|~H_{s}^{x}=y\},
\end{align*}
which is the minimal time it takes for the random walk that starts at $x$ to arrive at $y$. Note that $\tau(x,y)$ is a random variable defined on the set of all possible paths, taking values in $\mathbb{Z}_{>0} \cup \{\infty\}$. We define the \textit{distribution of the first passage times} as:
\begin{align*}
r(t,x,y):=\mathbb{P}(\{\tau(x,y)=t\}), \qquad t = 1,2,3,\dots.
\end{align*}

The inverse problem we are interested in is the following: Suppose information propagates in a network $\mathbb{G}$ as random walks, and the transition probabilities are structured as in~\eqref{Heat_transition_proba_def}, where $\mu$ is the vertex centrality, and $w$ is the known edge weight function.
Let $B\subset X$ be a fixed subset of vertices 
where observation can be made. 
If we release multiple information sources in $B$ and collect the distribution of first passage times of the information arriving at $B$, can we estimate the vertex centrality $\mu$? The problem can be made more precise using the notations we have introduced as follows:

\begin{center}
\textit{Given $(X,\mathcal{E}, \mu|_{B}, w)$ and the data $r|_{\mathbb{Z}_{2T}\setminus\{0\} \times B \times B}$ for some $T<\infty$, how can we recover $\mu|_{X\setminus B}$?
}\end{center}

Here, we assume $\mu|_{B}$ is known since $B$ is the set of observation, and  it remains to recover $\mu|_{X\setminus B}$ on the unobserved part of the network. The uniqueness of $\mu|_{X\setminus B}$ has been established (among other results) in \cite[Theorem 1.5]{MR4598377} under suitable geometric conditions. However, the proof is non-constructive, thus does not readily give a reconstruction method. In this paper, we focus on the reconstruction problem. Our central result is an explicit reconstruction formula for $\mu|_{X\setminus B}$ derived based on the boundary control method. The formula further leads to a direct, non-iterative algorithm (see Algorithm~\ref{alg:Framwork_of_Heateq}), which is numerically validated using simulations. Following \cite{MR4598377}, the major assumption for our derivation is:

\medskip
\begin{assumption}\label{Assum_eigenfunction}
Let $\mathbb{G}:= (X,\mathcal{E},\mu,w)$ be a finite, undirected, connected, simple graph, and $B\subset X$ a subset of vertices. Suppose
\begin{enumerate}
\item[(i)] there does not exist a nonzero eigenfunction of $-\Delta$ vanishing on $B$;
\item[(ii)] the edge weight satisfies
${\mu_x}\geq \sum\limits_{\substack{y\in X\\ y\sim x}}w_{xy}$ for all $x\in X$.
\end{enumerate} 
\end{assumption}

The condition (ii) is needed simply to ensure that the transition probabilities $p_{xx}$ defined in~\eqref{Heat_transition_proba_def} are non-negative. The condition (i) is a unique continuation principle on the graph. It ensures the exact controllability of the graph heat equation with controls localized in $B$ in~\cite[Proposition 5.3]{MR4598377}, which provides the foundation for our reconstruction method, see Lemma~\ref{span_space_l2X_heat}.

\subsection{Literature Review}
Most inverse problems concerning random walks and Markov chains on graphs focus on the recovery of transition probabilities under a prescribed graph topology. These include recursive reconstruction of Markov transition probabilities from boundary measurements \cite{MR1339870}, as well as the determination of transition probabilities from input–output travel time data on classes of directed graphs with loops \cite{MR2321831}. From a statistical and algorithmic perspective,  a convergent algorithm for recovering transition probabilities based on the relative net number of traversals along edges was proposed in \cite{MR3925423}. Related identifiability results have been established for various graph structures: in \cite{MR2455175}, it was shown that a simple Markov chain on the integers is uniquely determined—among chains differing from a reference chain only on a finite set—by the joint distribution of first hitting times and locations, while  an analogous uniqueness result for nondegenerate simple Markov chains on certain classes of finite rooted trees was proved in \cite{MR2479475}. In applied settings, partial observations of visiting frequencies and transition rates have been used to infer both initial states and transition probabilities of Markov chains, with applications to traffic monitoring systems \cite{mr105555}. The work that is most relevant to our paper is \cite{MR4598377}, which proves that, under suitable geometric conditions, the graph structure and the transition probabilities are uniquely determined by the distributions of the first passage times. These works demonstrate that random walk observables can encode sufficient information to recover transition mechanisms, forming the foundation for inverse problems on graphs driven by stochastic dynamics.

The reconstruction approach employed in this work is a variant of the boundary control method pioneered by Belishev~\cite{MR924687}. The boundary control method was originally developed for inverse problems associated with the continuous wave equation. Its central idea is that boundary excitations generate waves whose interior dynamics encode geometric and analytic information about the underlying domain and coefficients, and that this information can be accessed through suitably designed boundary controls and measurements. When combined with Tataru’s unique continuation results ~\cite{tataru95unique,Tataru99unique}, the boundary control method has proven to be a powerful tool for establishing identifiability of coefficients in evolution equations. In particular, a boundary control framework was developed for the continuous heat equation in~\cite{Avdonin1997BC} to recover the heat conductivity, which may be viewed as a continuous analogue of the approach pursued here, albeit with different types of measurements. A key advantage of the boundary control method lies in its constructive nature: it often yields explicit operator identities that relate measured boundary data, either directly or indirectly, to the unknown quantities of interest. Numerical implementations based on the boundary control method have been developed for continuous wave equations ~\cite{MR1694848,MR3483640,MR3825620,MR3959328,MR4844605,
pestov2012numerical,pestov2010numerical,MR4369044,MR4500893, yang2026formula,yang2026linearized} and for continuous heat equations~\cite{Avdonin1997BC}. We refer the reader to the surveys and monograph~\cite{belishev2017boundary,MR2353313,katchalov2001inverse} for comprehensive overviews of the boundary control method and its applications to continuous evolution equations.

The boundary control method has also been adapted to study inverse problems on graphs. The development includes determining planar metric trees and their edge densities from spectral data~\cite{MR2067494,MR2218385}, detecting cycles on metric graphs from boundary measurement~\cite{MR2545980}, computing vertex centrality on combinatorial graphs from spectral data~\cite{MR4907044}, and detecting network blockage~\cite{blaasten2019blockage}. The present work follows this line of research by developing a boundary control framework for the graph heat equation to address an inverse problem arising in the context of information diffusion.

\subsection{The Contribution} This paper's major contribution is an explicit reconstruction formula and a corresponding algorithm for computing the vertex centrality $\mu|_{X\setminus B}$ from the distribution of first passage times observed on a subset of vertices, see Algorithm~\ref{alg:Framwork_of_Heateq}. By adapting the boundary control method to the graph heat equation, we obtain a closed-form reconstruction formula, from which a direct, non-iterative algorithm is derived. The effectiveness of the proposed algorithm is demonstrated through numerical experiments. This approach has potential applications in identifying unknown graph structures in models of information diffusion.

Along the derivation of the reconstruction formula, we also obtain new uniqueness results as a byproduct. In particular, we establish uniqueness for the recovery of (the orthogonal projection of) the vertex centrality under weaker geometric assumptions; see Theorem~\ref{Heat_orthogonal_projection_of_mu_onto_Q} and Corollary~\ref{coro_dense_for_Heatweight}. Although the uniqueness of vertex centrality can be deduced from \cite[Theorem 1.5]{MR4598377}, the proof therein relies on a stronger geometric hypothesis known as the two-points condition. In contrast, our Corollary~\ref{coro_dense_for_Heatweight} establishes a generic uniqueness result under the sole requirement of the unique continuation principle stated in Assumption~\ref{Assum_eigenfunction}~(i). Since the two-points condition implies the unique continuation principle \cite[Proposition 2.6]{MR4598377}, our result demonstrates new cases where generic uniqueness holds under weaker assumptions.

\medskip
The paper is structured as follows: Section~\ref{sec:Blagoidentity} proves Blagovescenskii-type identities that link the data and several inner products on the graph. Section~\ref{Sec:Heat_Uniqueness and Reconstruction} utilizes the identities to derive the reconstruction procedure. Section~\ref{Sec:Heat_Reconstruction_algorithm} summarizes the reconstruction algorithm, and discusses details of implementation. Section~\ref{Sec:Heat_Numerical_experiment} presents numerical experiments on small graphs to validate and assess the algorithm.

\section{Blagovescenskii-type Identity} \label{sec:Blagoidentity}
This section establishes an explicit expression for the solution to the  nonhomogeneous heat equation \eqref{nonhomog_graph_heat_eq} following the approach of \cite{MR4598377}, and a Blagovescenskii-type Identity. The expression of the solution is given in terms of distributions of the first passage times
on the vertex observation set $B \subset X$ for discrete times $t \in \mathbb{Z}_{2T}\setminus\{0\}$.

\subsection{Representation of Graph Heat Solutions on \texorpdfstring{$B$}{}}\label{Sec:Distributions of first passage  times determine solution on observation set}

Let $U = U^f(t,x)$ denote the solution of the following nonhomogeneous heat equation on $X$ with the zero initial condition: 
\begin{align}\label{nonhomog_graph_heat_eq}
\begin{cases}
D_t U(t,x) - \Delta U(t,x) = f(t,x), & (t,x)\in \mathbb{Z}_{2T}\times X, \\
U(0,x) = 0, & x\in X.
\end{cases}
\end{align}
Let us identify $\ell^2(\mathbb{Z}_T\times B)$ with the subspace of $\ell^2(\mathbb{Z}_{2T}\times X)$ consisting of functions that are compactly supported on $\mathbb{Z}_T\times B$, that is,
$$
\ell^2(\mathbb{Z}_T\times B) \cong \{f\in \ell^2(\mathbb{Z}_{2T}\times X) \mid \operatorname{supp}(f)\subset \mathbb Z_{T}\times B \}.
$$
Specifically, we identify $f\in \ell^2(\mathbb{Z}_T\times B)$ with its zero extension in $\ell^2(\mathbb{Z}_{2T}\times X)$. We will denote such extension again by $f$. Henceforth, we will only consider sources $f\in\ell^2(\mathbb{Z}_{T}\times B)$ when solving for $U^f$.

The next lemma, proved in \cite[Lemma 3.4, Theorem 1.5]{MR4598377}, 
gives an explicit representation of $U^f$ in terms of $f$ and the data $r|_{\mathbb{Z}_{2T}\setminus\{0\} \times B \times B}$.

\begin{lemma}\label{Lemma_Uf_by_combinations_by_r}\cite[Lemma 3.4, Theorem 1.5]{MR4598377}
The solution $U^f|_{\mathbb{Z}_{2T}\times B}$ with $f\in \ell^2(\mathbb{Z}_T\times B)$ can be written in terms of $f$ and $r(t, x, y)|_{\mathbb{Z}_{2T}\setminus\{0\} \times B \times B}$ as follows:

\begin{align}\label{Uf_by_combinations_by_r}
U^f(t,x) =
\begin{cases}
0, \quad & t=0, \\
f(0,x),\quad &t=1,\\
\sum\limits_{y\in B}f(t-2,y)\cdot r(1,x,y)+\sum\limits_{s=1}^{t-2}\sum\limits_{y\in B} \bigg\{f(s-1,y)\Big( r(t-s,x,y)\\
\quad +\sum\limits_{j=2}^{t-s}\sum\limits_{1\leq t_1< t_2<\cdots<t_j=t-s} r(t_1,x,y)\cdot  \prod\limits_{i=2}^jr(t_i-t_{i-1},y,y) \Big) \bigg\} + f(t-1,x),\quad & t\geq 2
\end{cases}
\end{align}
for $(t,x)\in \mathbb Z_{2T}\times B$. 

\end{lemma}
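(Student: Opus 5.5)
The plan is to combine a Duhamel formula for the discrete heat equation with the renewal (first-passage) decomposition of the $n$-step transition probabilities of the walk $H_t^x$.

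\emph{Step 1 (Duhamel formula).} Let $P$ be the matrix with entries $P(x,y)=p_{xy}$ from \eqref{Heat_transition_proba_def}. By the definition of $\Delta$,
$$
(I+\Delta)u(x)=\sum_{y\in X}p_{xy}u(y),
$$
so $I+\Delta=P$. The equation \eqref{nonhomog_graph_heat_eq} can then be rewritten as $U(t+1)=PU(t)+f(t)$ with $U(0)=0$. Iterating gives
$$
U^f(t,x)=\sum_{s'=0}^{t-1}\sum_{y\in X}P^{\,t-1-s'}(x,y)\,f(s',y).
$$
The $n$-step matrix entries are probabilities, $P^n(x,y)=\mathbb{P}(H^x_n=y)$; this follows from \eqref{u_def_by_expectation} with $g=\delta_y$. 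Since $f$ is supported in $\mathbb{Z}_T\times B$, only $y\in B$ contribute. The case $t=0$ is trivial, and the case $t=1$ gives $U^f(1,x)=f(0,x)$ because $P^0=I$.

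\emph{Step 2 (renewal decomposition).} For $n\ge 1$ and $x,y\in X$, I will show
$$
\mathbb{P}(H^x_n=y)=r(n,x,y)+\sum_{j=2}^{n}\sum_{1\le t_1<\dots<t_j=n}r(t_1,x,y)\prod_{i=2}^{j}r(t_i-t_{i-1},y,y).
$$
To prove it, split the event $\{H^x_n=y\}$ according to the set of times $1\le t_1<\dots<t_j=n$ at which the walk visits $y$ during $[1,n]$. These events are disjoint for distinct tuples, and the requirement $t_j=n$ holds because the walk is at $y$ at time $n$. The probability of a given tuple factors by the strong Markov property applied at the successive hitting times. Concretely, the walk must first reach $y$ at time $t_1$, contributing $r(t_1,x,y)$. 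It then returns to $y$ for the first time after a further $t_i-t_{i-1}$ steps, contributing $r(t_i-t_{i-1},y,y)$; these factors are independent by the Markov property at each stopping time. The case $x=y$ is included, with $r(\cdot,x,x)$ read as the return-time distribution. This is the step I expect to need the most care: one must verify that the events are exactly a disjoint partition and that the factorization is justified at each stopping time. This is a standard argument by induction on $j$ using the Markov property at the deterministic time $t_{j-1}$ restricted to the event of the earlier visits.

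\emph{Step 3 (matching indices).} For $t\ge 2$ I split the Duhamel sum by $n=t-1-s'$.
\begin{itemize}
\item $s'=t-1$ gives $n=0$ and $P^0(x,y)=\delta_{xy}$, which produces $f(t-1,x)$; this uses $x\in B$.
\item $s'=t-2$ gives $n=1$. Here only $j=1$ occurs, so the term is $\sum_{y\in B}f(t-2,y)\,r(1,x,y)$.
\item $0\le s'\le t-3$ gives $n\ge 2$. Substituting $s=s'+1\in\{1,\dots,t-2\}$ makes $n=t-s$, and Step 2 turns $P^{t-s}(x,y)$ into exactly the bracketed expression in \eqref{Uf_by_combinations_by_r}, multiplied by $f(s-1,y)$.
\end{itemize}

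Finally, since $x\in B$ and $t\le 2T-1$, every value $r(m,\cdot,\cdot)$ that appears has $1\le m\le 2T-1$ and both arguments in $B$. So the formula uses only the data $r|_{\mathbb{Z}_{2T}\setminus\{0\}\times B\times B}$, which proves the lemma.
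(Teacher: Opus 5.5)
Your proposal is correct and follows essentially the same route as the paper. Both arguments superpose Dirac-initial fundamental solutions, with $u_y(n,x)=\mathbb{P}(H^x_n=y)$ playing the role of your $P^n(x,y)$, $P=I+\Delta$. Both then apply the renewal decomposition of that probability by the visit times to $y$, and finish with the reindexing $s=s'+1$. Your version is a bit more self-contained: you derive the Duhamel sum by iterating $U(t+1)=PU(t)+f(t)$, and you sketch the Markov-property justification of the renewal identity, which the paper takes from \cite{MR4598377}.
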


\begin{proof}
We begin by noting that the initial condition for the heat equation \eqref{nonhomog_graph_heat_eq} is $U^f(0,x) = 0$ for all $x \in X$.
Following the approach in \cite[Lemma 3.4]{MR4598377}, the solution $U^f$ can be expressed as a linear combination of fundamental solutions $u_y : \mathbb{Z}_{2T} \times X \to \mathbb{R}$ to the homogeneous heat equation with Dirac initial conditions
\begin{align}\label{fundamental_homheat}
\begin{cases}
D_t u_y(t,x) - \Delta u_y(t,x) = 0, & (t,x)\in \mathbb Z_{2T}\times X, \\
u_y(0,x) = \delta_{y}(x), & x\in X
\end{cases}
\end{align}
for each fixed $y\in B$. Here, $\delta_y(x) = 1$ if $x=y$, and $\delta_y(x) = 0$ if $x\neq y$. Specifically, for $s \in \mathbb{Z}_+$,  we have
\begin{align}\label{Uf_by_combinations}
 U^f(t,x)=
\sum\limits_{s=1}^t\sum_{y\in B} f(s-1,y)u_y(t-s,x),\quad (t,x)\in \mathbb{Z}_{2T}\setminus\{0\}\times X,
\end{align}
with $U^f(0,x)=0$ for $x\in X$.

Based on the definition of $u$ in \eqref{u_def_by_expectation}, the fundamental solutions $u_y$ admit a probabilistic representation in terms of the first passage time distributions $r$, which is established in the proof of \cite[Theorem 1.5]{MR4598377}:
\begin{align}\label{delta_by_rtxy}
u_y(t,x) & =\mathbb{E}(\delta_y(H_t^{x}))=\mathbb{P}(H_t^{x}=y) \nonumber\\
& =\begin{cases}
r(1,x,y),\quad t=1,\\
r(t,x,y)+\sum\limits_{j=2}^t\sum\limits_{1\leq t_1< t_2<\cdots<t_j=t}r(t_1,x,y)\cdot \prod\limits_{i=2}^jr(t_i-t_{i-1},y,y),\quad t\in\{2,\dots,2T-1\}
\end{cases}
\end{align}
for all $x\in X$ and $y\in B$. 

The expression \eqref{delta_by_rtxy} decomposes the event $\{H_t^x = y\}$ according to $j$ distinct visiting times to the vertex $y$. The term $r(t,x,y)$ corresponds to the walk first reaching $y$ exactly at time $t$. For $j \geq 2$, the inner sum accounts for sequences where the walk visits $y$ at exactly $j$ distinct times $t_1 < \cdots < t_j = t$, with $r(t_1,x,y)$ representing the probability of the first arrival at $y$ at time $t_1$ from $x$, and each factor $r(t_i - t_{i-1}, y, y)$ giving the probability of the first return to $y$ (exactly at time $t_i$) from $y$ at time $t_{i-1}$ with $2\leq i\leq j$.

Substituting the  equality \eqref{delta_by_rtxy} into the equality  \eqref{Uf_by_combinations} for $t\in\mathbb{Z}_{2T}\setminus\{0\}$, and observing that  
\begin{align*}
\sum_{y\in B} f(t-1,y)\delta_y(x)=f(t-1,x)
\end{align*}
since $f$ is supported on $\mathbb{Z}_T\times B$, we obtain \eqref{Uf_by_combinations_by_r}.

\end{proof}

\begin{remark}\label{Operation_complexity_of_Uf}
The nested summations in \eqref{Uf_by_combinations_by_r} make the computational cost grow exponentially in $T$. To see this, let us count the number of basic arithmetic operations (i.e, addition, subtraction, multiplication and division). We will ignore the computational cost to evaluate functions.
Fix \(t\) and \(x\in B\). The dominant cost in \eqref{Uf_by_combinations_by_r} comes from the combinatorial block
\[
\sum_{s=1}^{t-1}\sum_{y\in B}
\sum_{j=2}^{t-s}\sum_{1\le t_1<\cdots<t_j=t-s}
r(t_1,x,y)\prod_{i=2}^j r(t_i-t_{i-1},y,y).
\]
For \(m:=t-s\), there are \(2^{m-1}-1\) summands, and the sum of the multiplication counts over \(j\) equals
\(\sum_{j=2}^{m} (j-1)\binom{m-1}{j-1}=(m-1)2^{m-2}\). Hence, for fixed \(t,x\), the number of multiplications contributed by the combinatorial block is
\[
|B|\sum_{m=1}^{t-1}(m-1)2^{m-2}=|B|\big((t-3)2^{\,t-2}+1\big).
\]
Summing over \(x\in B\) and \(t\in\mathbb{Z}_{2T}\setminus\{0\}\), the total arithmetic complexity is
\[
\sum^{2T-1}_{t=1} |B|^2\big((t-3)2^{\,t-2}+1 \big) = |B|^2\big(2T+1+(2T-5)2^{2T-2}\big).
\]
The analysis shows that the computational cost of the nested summations in \eqref{Uf_by_combinations_by_r} is at least on the order of $O(|B|^2 T 2^{2T})$. Since Lemma \ref{span_space_l2X_heat} below requires $T \geq |X|$, the computational complexity scales in $|X|$ at least as $O(|B|^2 |X| 2^{2|X|})$. This exponential growth in $|X|$ becomes a crucial limiting factor for the size of the graphs in the numerical experiments in Section \ref{Sec:Heat_Numerical_experiment}.
\end{remark}

From the proof of Lemma \ref{Lemma_Uf_by_combinations_by_r}, we observe that for $t\in\mathbb{Z}_{2T}\setminus\{0\}$ and $x,y\in B$, 
both the fundamental  solution $u_y(t,x)$ to the equation \eqref{fundamental_homheat}, and the solution $U^f(t,x)$ to the nonhomogeneous heat equation \eqref{nonhomog_graph_heat_eq} can  be determined by $r(t, x, y)|_{\mathbb{Z}_{2T}\setminus\{0\} \times B \times B}$.

\subsection{Blagovescenskii-type Identity}

Recall that $U^f$ denotes the solution of the non-homogeneous heat equation~\eqref{nonhomog_graph_heat_eq} with $f\in\ell^2(\mathbb Z_{T}\times B)$. In particular, $U^f|_{\mathbb{Z}_{2T}\times B}$ can be obtained from $f$ and the data $r|_{\mathbb{Z}_{2T}\setminus\{0\} \times B \times B}$ by Lemma \ref{Lemma_Uf_by_combinations_by_r}.

We introduce a few linear operators that will be used later. First, define the \textit{source-to-solution map}
\begin{equation}\label{Map_of_source to solution_of_Nheat}
\begin{aligned}
 \Lambda_{\mu}: \ell^2(\mathbb{Z}_T\times B) & \to \ell^2(\mathbb{Z}_T\times B) \\
f & \mapsto U^f|_{\mathbb{Z}_T\times B}.
\end{aligned}
\end{equation}
Note that by Lemma \ref{Lemma_Uf_by_combinations_by_r}, the operator $\Lambda_\mu$ can be explicitly computed using the data $r|_{\mathbb{Z}_{2T}\setminus\{0\} \times B \times B}$.

Next, define
\begin{align*}
    W: \ell^2(\mathbb{Z}_T\times B) &\to \ell^2(X), \\
    f &\mapsto U^f(T,\cdot).
\end{align*}
$W$ is clearly a linear operator. Its $\ell^2$-adjoint is denoted by $W^*$. The following result, established in \cite{MR4598377}, shows that $W$ is surjective when $T$ is sufficiently large.

\begin{lemma}\label{span_space_l2X_heat}\cite[Proposition 5.3]{MR4598377} 
Suppose the graph satisfies Assumption~\ref{Assum_eigenfunction} and $T \geq |X|$. Then
    \begin{align*}
        \left\{ U^f(T, \cdot) \mid f\in\ell^2(\mathbb{Z}_{T}\times B) \right\} = \ell^2(X).
    \end{align*}

\end{lemma}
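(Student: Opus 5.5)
The plan is to prove surjectivity of $W$ by duality: since $W:\ell^2(\mathbb{Z}_T\times B)\to\ell^2(X)$ is a linear map between finite-dimensional spaces, its range is all of $\ell^2(X)$ if and only if $W^*$ is injective. So I will take $\phi\in\ell^2(X)$ with $\langle U^f(T,\cdot),\phi\rangle_X=0$ for every $f\in\ell^2(\mathbb{Z}_T\times B)$, and aim to show $\phi=0$.

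\textbf{Step 1: a formula for $W^*$.} Write $P:=I+\Delta$, which is the one-step evolution operator, since $U(t+1)=PU(t)+f(t)$. By Duhamel's principle (equivalently, iterating the recursion, as in \eqref{Uf_by_combinations}),
$$
U^f(T,\cdot)=\sum_{s=0}^{T-1}P^{\,T-1-s}f(s,\cdot).
$$
Because $\Delta$ is self-adjoint on $\ell^2(X)$, so is $P$. Therefore
$$
\langle U^f(T),\phi\rangle_X=\sum_{s=0}^{T-1}\langle f(s),P^{\,T-1-s}\phi\rangle_X .
$$
Since $f(s)$ is an arbitrary function supported in $B$, orthogonality for all $f$ is equivalent to
$$
(P^k\phi)|_B=0 \quad\text{for } k=0,1,\dots,T-1 .
$$

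\textbf{Step 2: spectral decomposition.} The operator $-\Delta$ is self-adjoint on $\ell^2(X)$, so it has an orthonormal eigenbasis. Let $\lambda_1,\dots,\lambda_m$ be its distinct eigenvalues, with $m\le |X|\le T$, and let $\pi_j\phi$ denote the orthogonal projection of $\phi$ onto the eigenspace of $\lambda_j$. Then
$$
P^k\phi=\sum_{j=1}^m(1-\lambda_j)^k\,\pi_j\phi ,
$$
and restricting to $B$ gives $\sum_{j=1}^m(1-\lambda_j)^k(\pi_j\phi)|_B=0$ for $k=0,\dots,T-1$.

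The numbers $1-\lambda_j$ are distinct, and $T\ge m$. The Vandermonde matrix $\big((1-\lambda_j)^k\big)_{0\le k\le m-1,\,1\le j\le m}$ is therefore invertible. Applying its inverse to the vector equation pointwise at each $x\in B$ yields $(\pi_j\phi)|_B=0$ for every $j$.

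\textbf{Step 3: conclusion.} Each $\pi_j\phi$ is either $0$ or an eigenfunction of $-\Delta$ that vanishes on $B$. Assumption~\ref{Assum_eigenfunction}(i) excludes the second case, so $\pi_j\phi=0$ for all $j$, and hence $\phi=\sum_j\pi_j\phi=0$. Thus $W^*$ is injective and $W$ is surjective.

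\textbf{Expected difficulty.} The only delicate point is Step 2. One has to group the eigenvectors by \emph{distinct} eigenvalues, so that the Vandermonde matrix is genuinely invertible and the projections onto possibly degenerate eigenspaces are handled correctly. The hypothesis $T\ge|X|$ is exactly what guarantees that at least $m$ powers $k=0,\dots,m-1$ are available. Assumption~\ref{Assum_eigenfunction}(ii) plays no role in this argument; it matters only for the probabilistic interpretation of the model.
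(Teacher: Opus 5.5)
Your proof is correct and complete. The paper does not prove this lemma itself; it imports it from \cite[Proposition 5.3]{MR4598377}. Your argument therefore works as a self-contained replacement for that citation.

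The Duhamel formula $U^f(T)=\sum_{s=0}^{T-1}P^{T-1-s}f(s)$, with $P=I+\Delta$ self-adjoint in the $\mu$-weighted inner product, identifies the orthogonal complement of the range of $W$ with $\{\phi : (P^k\phi)|_B=0,\ 0\le k\le T-1\}$. The positivity of $\mu$ on $B$ is what lets you pass from orthogonality to pointwise vanishing there. Inverting the Vandermonde matrix over the distinct eigenvalues of $-\Delta$, together with Assumption~\ref{Assum_eigenfunction}(i), then forces every spectral component of $\phi$ to vanish.

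This is the standard spectral (Hautus-type) argument for exact controllability of the linear system $U(t+1)=PU(t)+f(t)$ with inputs supported on $B$. Beyond the citation, it shows explicitly two things. First, Assumption~\ref{Assum_eigenfunction}(ii) plays no role, as you observed. Second, the horizon $T\ge|X|$ can be relaxed to $T\ge m$, where $m$ is the number of distinct eigenvalues of $\Delta$.
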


For a function $u(t,x)$, we introduce the time reversal operator: 
\begin{align*}
R_{T-1}: \ell^2(\mathbb{Z}_T \times B) &\to \ell^2(\mathbb{Z}_T \times B), \\
 u(t, \cdot) &\mapsto u(T-1-t, \cdot), \quad t \in \mathbb{Z}_T.
\end{align*}
The operator $R_{2T-1}$ 
is defined likewise: 

\begin{align*}
R_{2T-1}: \ell^2(\mathbb{Z}_{2T} \times B) &\to \ell^2(\mathbb{Z}_{2T} \times B), \\
 u(t, \cdot) &\mapsto u(2T-1-t, \cdot), \quad t \in \mathbb{Z}_{2T}.
\end{align*}

We also define the temporal projection operator on $B$: 
\begin{align*}
P_T: \ell^2(\mathbb{Z}_{2T} \times B) &\to \ell^2(\mathbb{Z}_T \times B), \\
 u &\mapsto u|_{\mathbb{Z}_{T}\times B}.
\end{align*}

Let $U^{f_1}$ and $U^{f_2}$ denote the solutions of the heat equation \eqref{nonhomog_graph_heat_eq} with sources $f_1, f_2\in\ell^2(\mathbb{Z}_T\times B)$, respectively.
The next proposition shows that the inner product of these solutions on $X$ at time $t=T$ can be represented by $f_1$ and  $f_2$. 
This type of result, known as the Blagovescenskii identity~\cite{Blagoveshchenskii1967}, plays a crucial role in the framework of the boundary control method. Henceforgth, for a function $u(t,x)$, we write $u(t)$ for the spatial function $u(t, \cdot)$.

\begin{proposition}\label{prop:product_of_Uf12}
For any $f_1, f_2\in\ell^2(\mathbb{Z}_T\times B)$, we have 

\begin{equation}\label{Heat_Inner_product_of_U^f}
\langle U^{f_1}(T),U^{f_2}(T)\rangle_X =\langle W{f_1},W{f_2}\rangle_X  = \langle f_1, P_TR_{2T-1} U^{f_2} \rangle_{\mathbb{Z}_T\times B}.
\end{equation}
In particular, $W^*W f_2=P_TR_{2T-1} U^{f_2}$ for any $f_2\in\ell^2(\mathbb{Z}_T\times B)$.

\end{proposition}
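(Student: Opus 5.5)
The plan is to write $U^{f}(T)$ via a discrete Duhamel formula, move the propagators onto $U^{f_2}(T)$ using self-adjointness, and recognise the resulting propagated functions as later-time values of $U^{f_2}$.

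\textbf{Step 1 (Duhamel formula).} The equation \eqref{nonhomog_graph_heat_eq} reads $U(t+1)=(I+\Delta)U(t)+f(t)$ with $U(0)=0$. Iterating gives
$$
U^{f}(t)=\sum_{s=0}^{t-1}(I+\Delta)^{t-1-s}f(s).
$$
This is the same representation as \eqref{Uf_by_combinations}, since $u_y(k)=(I+\Delta)^k\delta_y$. In particular,
$$
U^{f_1}(T)=\sum_{s=0}^{T-1}(I+\Delta)^{T-1-s}f_1(s).
$$

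\textbf{Step 2 (move the propagators).} Since $\Delta$ is self-adjoint on $\ell^2(X)$ with respect to $\langle\cdot,\cdot\rangle_X$, so is $(I+\Delta)^k$. Therefore
$$
\langle U^{f_1}(T),U^{f_2}(T)\rangle_X=\sum_{s=0}^{T-1}\big\langle f_1(s),(I+\Delta)^{T-1-s}U^{f_2}(T)\big\rangle_X .
$$

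\textbf{Step 3 (identify the propagated solution).} This is the step that needs care. Because $f_2$ is zero-extended, it vanishes for $t\ge T$. Hence on the times $t=T,\dots,2T-1$, which all lie in $\mathbb Z_{2T}$, the solution obeys the homogeneous recursion $U^{f_2}(t+1)=(I+\Delta)U^{f_2}(t)$. By induction, $(I+\Delta)^kU^{f_2}(T)=U^{f_2}(T+k)$ for $0\le k\le T-1$. The equation on $\mathbb Z_{2T}$ defines $U$ up to time $2T$, so these values are available. Taking $k=T-1-s$ gives $(I+\Delta)^{T-1-s}U^{f_2}(T)=U^{f_2}(2T-1-s)$, where $2T-1-s\in\{T,\dots,2T-1\}$.

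\textbf{Step 4 (restrict to $B$).} Since $f_1(s)$ is supported in $B$, each inner product over $X$ reduces to one over $B$. This yields
$$
\sum_{s=0}^{T-1}\sum_{x\in B}\mu_x f_1(s,x)\,U^{f_2}(2T-1-s,x)=\langle f_1,P_TR_{2T-1}U^{f_2}\rangle_{\mathbb Z_T\times B},
$$
which is \eqref{Heat_Inner_product_of_U^f}. The first equality in \eqref{Heat_Inner_product_of_U^f} is just the definition of $W$.

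\textbf{Step 5 (the adjoint identity).} The left side equals $\langle f_1,W^*Wf_2\rangle_{\mathbb Z_T\times B}$, and $f_1$ is arbitrary. Since $\mu>0$, the pairing is nondegenerate, so $W^*Wf_2=P_TR_{2T-1}U^{f_2}$.
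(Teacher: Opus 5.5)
Your proof is correct, but it is organized differently from the paper's. The paper never writes down a solution operator. It introduces the two-time function $I(s,t)=\langle U^{f_1}(s),U^{f_2}(t)\rangle_X$ and uses self-adjointness of $\Delta$ to get $(D_s-D_t)I(s,t)=F(s,t)$, where $F(s,t)=\langle f_1(s),U^{f_2}(t)\rangle_B-\langle U^{f_1}(s),f_2(t)\rangle_B$. It rewrites this as the recurrence $I(s+1,t)=I(s,t+1)+F(s,t)$ and proves by induction on $s$ that $I(s,t)=\sum_{j=0}^{s-1}F(j,t+s-1-j)$. At $s=t=T$ the $f_2$-terms drop out because $f_2$ vanishes at times $\geq T$.

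You instead use the explicit one-step structure $U(t+1)=(\mathrm{Id}+\Delta)U(t)+f(t)$ and write Duhamel's formula. You then move all powers of the self-adjoint propagator $\mathrm{Id}+\Delta$ onto $U^{f_2}(T)$ at once, and identify $(\mathrm{Id}+\Delta)^kU^{f_2}(T)=U^{f_2}(T+k)$ by free evolution after time $T$.

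Both arguments rest on the same two facts: self-adjointness of $\Delta$ in the $\mu$-weighted inner product, and $f_2\equiv 0$ on $\{T,\dots,2T-1\}$. In fact each step of the paper's recurrence is a single application of your propagator transfer, since $\langle (\mathrm{Id}+\Delta)U^{f_1}(s),U^{f_2}(t)\rangle_X=\langle U^{f_1}(s),U^{f_2}(t+1)-f_2(t)\rangle_X$.

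Your route is shorter and makes the mechanism transparent. $W^*Wf_2$ is obtained by letting $U^{f_2}$ evolve freely from $T$ to $2T-1$ and reading it on $B$ in reversed time, which also shows why data up to time $2T-1$ are needed. The paper's route follows the classical Blagovescenskii template from the continuum and does not rely on an explicit propagator. It also yields the more general identity for $\langle U^{f_1}(s),U^{f_2}(t)\rangle_X$ at all $1\le s\le 2T-1$, $0\le t\le 2T-s$ before specializing.
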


\begin{proof}
The first equality follows from the definition of $W$. It remains to prove the second equality.

Define $I(s,t) := \langle U^{f_1}(s), U^{f_2}(t) \rangle_X$. 
Direct computation yields
\begin{align*}
(D_s - D_t) I(s,t) 
&= \langle D_s U^{f_1}(s), U^{f_2}(t) \rangle_X - \langle U^{f_1}(s), D_t U^{f_2}(t) \rangle_X \\
&= \langle \Delta U^{f_1}(s) + f_1(s), U^{f_2}(t) \rangle_X - \langle U^{f_1}(s), \Delta U^{f_2}(t) + f_2(t) \rangle_X \\
&= \langle \Delta U^{f_1}(s), U^{f_2}(t) \rangle_X + \langle f_1(s), U^{f_2}(t) \rangle_X \\
&\quad - \langle U^{f_1}(s), \Delta U^{f_2}(t) \rangle_X - \langle U^{f_1}(s), f_2(t) \rangle_X.
\end{align*}
By the self-adjointness of the operator $\Delta$ with respect to $\langle \cdot, \cdot \rangle_X$, 
we have $\langle \Delta U^{f_1}(s), U^{f_2}(t) \rangle_X = \langle U^{f_1}(s), \Delta U^{f_2}(t) \rangle_X$. Therefore, the terms involving $\Delta$ cancel, and we obtain:
\begin{align*}
(D_s - D_t) I(s,t) & = \langle f_1(s), U^{f_2}(t) \rangle_X - \langle U^{f_1}(s), f_2(t) \rangle_X \\
 & = \langle f_1(s), U^{f_2}(t) \rangle_B - \langle U^{f_1}(s), f_2(t) \rangle_B, \quad 0 \leq s, t \leq 2T-1
\end{align*}
where the second inequality holds since $\operatorname{supp}(f_1), \operatorname{supp}(f_2) \subset \mathbb{Z}_T \times B$.

Let us denote the right hand side as $F(s,t) := \langle f_1(s), U^{f_2}(t) \rangle_B - \langle U^{f_1}(s), f_2(t) \rangle_B$. 
From the definitions of $D_s$ and $D_t$, we obtain the recurrence relation
\begin{align*}
I(s+1,t)- I(s,t) -(I(s,t+1)- I(s,t)) = F(s,t), \quad 0\leq s,~t\leq 2T-1.
\end{align*}
In other words,
\begin{align}\label{Uf_induction_of_innerproduct}
I(s+1,t)=I(s,t+1)+F(s,t), \quad 0\leq s,~t\leq 2T-1.
\end{align}
We now prove by induction that for $1 \leq s \leq 2T-1$ and $0 \leq t \leq 2T-s$,
\begin{equation}\label{HeatUU_induction_equality}
I(s,t) = \sum_{j=0}^{s-1} F(j, t+s-1-j).
\end{equation}
By the definition of $I(s,t)$, we have
\begin{align*}
I(0,t) &= \langle U^{f_1}(0), U^{f_2}(t) \rangle_X = 0, \quad t\geq 0.
\end{align*}
For the base case $s=1$ in \eqref{HeatUU_induction_equality}, it follows from  equality \eqref{Uf_induction_of_innerproduct} that
\begin{align*}
I(1,t) = I(0,t+1) + F(0,t) = F(0,t), \quad 0 \leq t \leq 2T-1,
\end{align*}
which agrees with \eqref{HeatUU_induction_equality}.
Now we assume that \eqref{HeatUU_induction_equality} holds for $s \leq  k$ and prove the case $s = k+1$.
Indeed, using the recursive relation~\eqref{Uf_induction_of_innerproduct}, we obtain
\begin{align*}
I(k+1,t) &= I(k,t+1) + F(k,t) \\
&= \sum_{j=0}^{k-1} F(j, t+1+k-1-j) + F(k,t) \\
&= \sum_{j=0}^{k} F(j, t+k-j), \quad 0 \leq t \leq 2T-k-1,
\end{align*}
which justifies the case $s = k+1$.

Substituting the definitions of $I(s,t)$ and $F(s,t)$ into \eqref{HeatUU_induction_equality} yields
\[
\langle U^{f_1}(s), U^{f_2}(t) \rangle_X = \sum_{j=0}^{s-1} \left[ \langle f_1(j), U^{f_2}(t+s-1-j) \rangle_B - \langle U^{f_1}(j), f_2(t+s-1-j) \rangle_B \right]
\]
for $1 \leq s \leq 2T-1$ and $0 \leq t \leq 2T-s$.
Finally, taking $s = t = T$ and using $\operatorname{supp}(f_2) \subset \mathbb{Z}_T \times B$, we obtain
\[
\langle U^{f_1}(T), U^{f_2}(T) \rangle_X = \sum_{j=0}^{T-1} \langle f_1(j), U^{f_2}(2T-1-j) \rangle_B 
= \langle f_1, P_TR_{2T-1}U^{f_2} \rangle_{\mathbb{Z}_T \times B}.
\]
This proves the second equality in~\eqref{Heat_Inner_product_of_U^f}. The representation of $W^*W$ follows from the adjoint relation $\langle W f_1, W f_2 \rangle_X = \langle f_1, W^* W f_2 \rangle_{\mathbb{Z}_T \times B}$ and the arbitrariness of $f_1,f_2\in \ell^2(\mathbb{Z}_T\times B)$.
\end{proof}

Next, we introduce a class of functions defined on $X$ that will be used.

\begin{definition}
A function $\varphi: X \to \mathbb{R}$ is said to be \textit{harmonic} on $X \setminus B$, if 
\[
\Delta \varphi(x) = 0 \quad \text{for all } x \in X \setminus B.
\]
\end{definition}

\begin{remark}\label{Remark_harmonic_undependmu}
Given a function $\varphi: X \to \mathbb{R}$, we can determine whether it is harmonic on $X \setminus B$ without knowing $\mu|_{X \setminus B}$. This is because $\Delta \varphi(x) = 0$ if and only if 
$$
0 = \mu_x \Delta \varphi(x) = \sum_{\substack{y\in X\\ y\sim x}}w_{x,y}(u(y)-u(x)),
$$
where the right-hand side does not involve $\mu$.
\end{remark}

Let us introduce another spatial projection operator
$$
P_B: \ell^2(X) \to \ell^2(B), \qquad \varphi \mapsto \varphi|_{B}.
$$
The next proposition shows that the inner product between $U^{f}(T)$ and a function $\varphi$ that is harmonic on $X \setminus B$ can be computed from the source-to-solution map $\Lambda_\mu$.

\begin{proposition}\label{prop:product_of_Uf1phi}
Let $\varphi$ be harmonic on $X \setminus B$, and suppose the graph satisfies Assumption~\ref{Assum_eigenfunction}. Then for any $f\in\ell^2(\mathbb{Z}_T\times B)$, we have 

\begin{equation} \label{eq: prodwithharmonic}
\langle U^{f}(T), \varphi \rangle_X = \langle W f, \varphi \rangle_X = \langle f, R_{T-1} \Lambda_{\mu} R_{T-1} P_B \Delta \varphi + P_B \varphi \rangle_{\mathbb{Z}_T \times B}.
\end{equation}
Consequently, 

\begin{equation} \label{Heat_Wstar_formula}
W^* \varphi = R_{T-1} \Lambda_{\mu} R_{T-1} P_B \Delta \varphi + P_B \varphi\quad \in ~\ell^2(\mathbb{Z}_T \times B).
\end{equation}
\end{proposition}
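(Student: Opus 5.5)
The plan is a discrete Duhamel (summation-by-parts in time) argument for $J(t):=\langle U^f(t),\varphi\rangle_X$, followed by identifying the adjoint of the source-to-solution map $\Lambda_\mu$. The first equality in \eqref{eq: prodwithharmonic} is just the definition of $W$.

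\textbf{Step 1 (summation in time).} Since $U^f(0)=0$, we have $J(0)=0$. Using \eqref{nonhomog_graph_heat_eq} and the self-adjointness of $\Delta$,
\begin{equation*}
J(t+1)-J(t)=\langle \Delta U^f(t)+f(t),\varphi\rangle_X=\langle U^f(t),\Delta\varphi\rangle_X+\langle f(t),\varphi\rangle_B .
\end{equation*}
Because $\varphi$ is harmonic on $X\setminus B$, $\Delta\varphi$ is supported in $B$. Hence $\langle U^f(t),\Delta\varphi\rangle_X=\langle U^f(t),P_B\Delta\varphi\rangle_B$. Summing over $t=0,\dots,T-1$ gives
\begin{equation*}
\langle U^f(T),\varphi\rangle_X=\sum_{t=0}^{T-1}\langle U^f(t),P_B\Delta\varphi\rangle_B+\langle f,P_B\varphi\rangle_{\mathbb{Z}_T\times B}.
\end{equation*}
Here $P_B\Delta\varphi$ and $P_B\varphi$ are viewed as time-independent elements of $\ell^2(\mathbb{Z}_T\times B)$. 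Since $U^f|_{\mathbb{Z}_T\times B}=\Lambda_\mu f$, the first sum equals $\langle \Lambda_\mu f, P_B\Delta\varphi\rangle_{\mathbb{Z}_T\times B}$.

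\textbf{Step 2 (adjoint of $\Lambda_\mu$).} It remains to show $\Lambda_\mu^*=R_{T-1}\Lambda_\mu R_{T-1}$ on $\ell^2(\mathbb{Z}_T\times B)$. Once this holds, $\langle\Lambda_\mu f,g\rangle=\langle f,R_{T-1}\Lambda_\mu R_{T-1}g\rangle$, which yields \eqref{eq: prodwithharmonic}. Formula \eqref{Heat_Wstar_formula} then follows because $f$ is arbitrary.

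To prove the adjoint identity, I would use the representation \eqref{Uf_by_combinations}, namely $U^f(t,x)=\sum_{s=1}^t\sum_{y\in B}f(s-1,y)u_y(t-s,x)$, together with the symmetry
\begin{equation*}
\mu_x u_y(k,x)=\mu_y u_x(k,y).
\end{equation*}
This symmetry holds because $u_y(k)=(I+\Delta)^k\delta_y$, so $\mu_x u_y(k,x)=\langle (I+\Delta)^k\delta_y,\delta_x\rangle_X$, and $(I+\Delta)^k$ is self-adjoint on $\ell^2(X)$. Then I expand $\langle \Lambda_\mu f,h\rangle_{\mathbb{Z}_T\times B}$ as a triple sum over $t,s,x,y$. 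Next I swap the roles of $x,y$ via the symmetry and interchange the order of the $s$ and $t$ sums (the constraint is $1\le s\le t\le T-1$). Finally I substitute $t\mapsto T-1-t$, $s\mapsto T-1-s$ and recognize the result as $\langle f, R_{T-1}\Lambda_\mu R_{T-1}h\rangle$.

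An alternative route for Step 2 mimics the $I(s,t)$ recursion from the proof of Proposition~\ref{prop:product_of_Uf12}, pairing $U^f$ with the solution driven by the time-reversed source $R_{T-1}h$.

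\textbf{Main obstacle.} The only delicate point is the index bookkeeping in Step 2. The convolution kernel is $u_y(t-s,x)$ with source time $s-1$, and after reversal the shift by one must line up exactly with the definition of $\Lambda_\mu$ applied to $R_{T-1}h$. I would check this first on the $t=s$ term, where $u_y(0,x)=\delta_y(x)$ gives the identity contribution $f(t-1,x)$ and is visibly symmetric under reversal. Assumption~\ref{Assum_eigenfunction} does not appear to be needed beyond the standing setup; it only matters later, when $W$ is surjective.
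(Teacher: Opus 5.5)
Your proposal is correct. Step~1 is the paper's argument: the same $J(t)$, the same use of self-adjointness of $\Delta$ and of $\Delta\varphi=0$ on $X\setminus B$, and the same telescoping. The only real difference is how you establish $\Lambda_\mu^*=R_{T-1}\Lambda_\mu R_{T-1}$.

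The paper, in Appendix~\ref{Sec:aAppSourceSolutionAdjiont}, argues by duality. It introduces the backward terminal-value problem $D_t^*v-\Delta v=g$, $v(T-1)=0$, and sums by parts in time to get $\langle g,\Lambda_\mu f\rangle_{\mathbb{Z}_T\times B}=\langle v,f\rangle_{\mathbb{Z}_T\times B}$. It then observes that $v=R_{T-1}V$, where $V$ is the forward solution driven by $R_{T-1}g$.

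You instead use the explicit kernel $U^f(t)=\sum_{s=1}^{t}(I+\Delta)^{t-s}f(s-1)$ together with the symmetry $\mu_xu_y(k,x)=\mu_yu_x(k,y)$. This symmetry is valid: it is the reversibility $\mu_xp_{xy}=w_{xy}=\mu_yp_{yx}$ of the walk, iterated.

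The index bookkeeping you flag works out if you reflect both the source time $\sigma=s-1$ and the observation time $t$ by $k\mapsto T-1-k$. Your literal substitution $s\mapsto T-1-s$ is off by one unless $s$ is read as the source time. Under this reflection, the constraint $\sigma<t$ becomes $t'<\sigma'$, and the kernel index $t-\sigma-1$ becomes $\sigma'-t'-1$. After the $x\leftrightarrow y$ swap you can read off $\langle\Lambda_\mu f,h\rangle_{\mathbb{Z}_T\times B}=\langle R_{T-1}f,\Lambda_\mu R_{T-1}h\rangle_{\mathbb{Z}_T\times B}$. This is the claim, since $R_{T-1}$ is a self-adjoint involution.

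Your route makes the structure explicit. $\Lambda_\mu$ is a strictly lower block-triangular block-Toeplitz operator whose blocks $(u_y(m,x))_{x,y\in B}$ are self-adjoint in the $\mu|_B$-weighted inner product. The adjoint formula is then just the persymmetry of such matrices, and this matches how $[\Lambda_\mu]$ is assembled in the implementation. The paper's duality argument never touches the kernel, and it is the standard boundary-control (Green's identity) route, which carries over directly to the continuous heat equation.

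Both proofs rest on the same two facts: self-adjointness of $\Delta$ and time-independence of the coefficients. You are also right that Assumption~\ref{Assum_eigenfunction} plays no role in this proposition.
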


\begin{proof}
The first equality in~\eqref{eq: prodwithharmonic} follows from the definition of $W$. To prove the second inequality in~\eqref{eq: prodwithharmonic}, define $J(t) := \langle U^{f}(t), \varphi \rangle_X$. Then
\begin{align*}
D_t J(t) &= \langle D_t U^{f}(t), \varphi \rangle_X \\
&= \langle \Delta U^{f}(t) + f(t), \varphi \rangle_X \\
&= \langle U^{f}(t), \Delta \varphi \rangle_X + \langle f(t), \varphi \rangle_X \\
&= \langle U^{f}(t), \Delta \varphi \rangle_{(X \setminus B)\cup B}  + \langle f(t), \varphi \rangle_X \\
&= \langle U^{f}(t), P_B \Delta \varphi \rangle_B + \langle f(t), P_B \varphi \rangle_B, \qquad t\in\mathbb{Z}_{2T}.
\end{align*} 
Here, the second equality holds since $U^f$ solves~\eqref{nonhomog_graph_heat_eq}, the third equality holds since $\Delta$ is self-adjoint with respect to the inner product on $X$, and the last equality holds since $\Delta \varphi=0$ on $X \setminus B$ and $\operatorname{supp}(f) \subset \mathbb{Z}_T \times B$.
As $J(0) = \langle U^{f}(0), \varphi \rangle_X = 0$, we obtain the following recurrence relation from the definition of $D_t$:
\[
J(t+1) - J(t) = \langle U^{f}(t), P_B \Delta \varphi \rangle_B + \langle f(t), P_B \varphi \rangle_B, \qquad t\in\mathbb{Z}_{2T}.
\]
Therefore, we have
\begin{align*}
J(T) &= J(0) + \sum_{j=0}^{T-1} [J(j+1) - J(j)] \\
&= \sum_{j=0}^{T-1} \left[ \langle U^{f}(j), P_B\Delta \varphi \rangle_B + \langle f(j), P_B\varphi \rangle_B \right] \\
&= \langle U^{f}, P_B \Delta \varphi \rangle_{\mathbb{Z}_T \times B} + \langle f, P_B \varphi \rangle_{\mathbb{Z}_T \times B} \\
&= \langle \Lambda_{\mu} f, P_B \Delta \varphi \rangle_{\mathbb{Z}_T \times B} + \langle f, P_B \varphi \rangle_{\mathbb{Z}_T \times B} \\
&= \langle f, R_{T-1} \Lambda_{\mu} R_{T-1} P_B \Delta \varphi \rangle_{\mathbb{Z}_T \times B} + \langle f, P_B \varphi \rangle_{\mathbb{Z}_T \times B} \\
&= \langle f, R_{T-1} \Lambda_{\mu} R_{T-1} P_B \Delta \varphi + P_B \varphi \rangle_{\mathbb{Z}_T \times B}.
\end{align*}
Here, the fourth equality uses the definition of the source-to-solution map $\Lambda_{\mu}$, and the fifth employs the adjoint property $\Lambda_{\mu}^* = R_{T-1} \Lambda_{\mu} R_{T-1}$, which is proved in Appendix~\ref{Sec:aAppSourceSolutionAdjiont}. This proves the second inequality in~\eqref{eq: prodwithharmonic}.
Finally, the equality~\eqref{Heat_Wstar_formula} follows from the relation $\langle W f, \varphi \rangle_X = \langle f, W^* \varphi \rangle_{\mathbb{Z}_T \times B}$. 
\end{proof}

Although Proposition~\ref{prop:product_of_Uf1phi} holds for any $\varphi$ that is harmonic on $X\setminus B$, it suffices to consider just a few basis functions. Indeed, if we index the vertices in $X \setminus B$ as $x_1, \dots, x_{|X \setminus B|}$ and those in $B$ as $x_{|X \setminus B|+1}, \dots, x_{|X|}$, then the unique solution of the boundary value problem (uniqueness guaranteed by \cite[Lemma 10.1]{MR4907044})
\[
    \Delta u(x) = 0, \quad x\in X\setminus B, \qquad   u|_{B} = g
\]
is a linear combination of $\varphi^{(j)}$ that satisfies
\begin{equation}\label{Dirichlet_problem}
    \Delta \varphi^{(j)}(x) = 0 ~\text{ for }~ x\in X\setminus B, \qquad \varphi^{(j)}|_{B} = \delta^{(j)}.
\end{equation}
Here, $\delta^{(j)}$ denotes the function on $B$ defined by
\[
\delta^{(j)}(x) := 
\begin{cases}
1 & \text{if } x = x_{|X \setminus B|+j}, \\
0 & \text{if } x \in B \setminus \{x_{|X \setminus B|+j}\}.
\end{cases}
\]

\section{ Reconstruction Procedure}\label{Sec:Heat_Uniqueness and Reconstruction}

In this section, we derive the procedure to reconstruct the unknown vertex centrality $\mu|_{X \setminus B}$.

\subsection{Construction of controls}

We begin by constructing suitable control functions on $B$. Here, a control function $h_0$ is a function in $\ell^2(\mathbb{Z}_T\times B)$ such that $U^{h_0}(T) := U^{h_0}(T,\cdot)$ is a prescribed spatial function.

\begin{proposition}\label{Heat_B_harmonic_explicit h0}
Suppose the graph satisfy Assumption~\ref{Assum_eigenfunction} and $T \geq |X|$. For any $\psi$ that is harmonic on $X \setminus B$, the function 
\begin{equation}\label{Heat_B_h0formula}
h_0 := (W^*W)^\dagger W^* \psi \quad \in \ell^2(\mathbb{Z}_T \times B)
\end{equation}
satisfies $U^{h_0}(T, x) = \psi(x)$ for all $x \in X$, where $(\cdot)^\dagger$ denotes the pseudo-inverse. 
\end{proposition}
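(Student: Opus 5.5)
The plan is a finite-dimensional linear algebra argument. I will regard $W:\ell^2(\mathbb{Z}_T\times B)\to\ell^2(X)$ as a linear map between finite-dimensional inner product spaces. The main input is Lemma~\ref{span_space_l2X_heat}: under Assumption~\ref{Assum_eigenfunction} and $T\ge |X|$, the map $W$ is surjective. Harmonicity of $\psi$ is not needed for the identity $Wh_0=\psi$ itself. It matters only so that $W^*\psi$ is computable from data via Proposition~\ref{prop:product_of_Uf1phi}, and $W^*W$ is computable via Proposition~\ref{prop:product_of_Uf12}.

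The first step uses surjectivity. Pick any $h$ with $Wh=\psi$. Then $W^*\psi=W^*Wh$, so $W^*\psi\in\operatorname{Ran}(W^*W)$.

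The second step applies the pseudo-inverse. Since $W^*W$ is self-adjoint and positive semidefinite, $(W^*W)^\dagger W^*W$ is the orthogonal projection onto $\operatorname{Ran}(W^*W)=\operatorname{Ran}(W^*)=\operatorname{Ker}(W)^\perp$. Hence
\[
h_0=(W^*W)^\dagger W^*Wh=\Pi h,
\]
where $\Pi$ denotes the orthogonal projection onto $\operatorname{Ker}(W)^\perp$. The difference $h-\Pi h$ lies in $\operatorname{Ker}(W)$, so $Wh_0=Wh=\psi$, which means $U^{h_0}(T,x)=\psi(x)$ for all $x\in X$.

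An alternative route is to use the identity $(W^*W)^\dagger W^*=W^\dagger$ together with $WW^\dagger=I$. The latter holds because $W$ has full row rank, so $W^\dagger=W^*(WW^*)^{-1}$.

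The only point needing care is the inner products. The adjoint $W^*$ and the pseudo-inverse must be taken with respect to the weighted inner products $\langle\cdot,\cdot\rangle_{\mathbb{Z}_T\times B}$ and $\langle\cdot,\cdot\rangle_X$. This is harmless: the Moore–Penrose properties are intrinsic to inner product spaces, so the range and kernel identities hold verbatim. I expect no real obstacle beyond stating this consistently.
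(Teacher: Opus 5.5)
Your proposal is correct and follows essentially the same route as the paper: surjectivity of $W$ from Lemma~\ref{span_space_l2X_heat} makes $Wh=\psi$ solvable, and the pseudo-inverse formula then gives the minimum-norm solution. Your projection argument, $(W^*W)^\dagger W^*W=\Pi_{\operatorname{Ker}(W)^\perp}$, simply spells out why $h_0$ solves $Wh_0=\psi$, which the paper asserts in one line.
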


\begin{proof}
By Lemma~\ref{span_space_l2X_heat}, the operator $W$ is surjective,  so the linear system $U^{h}(T)=Wh=\psi$ admits solutions. The solution $h_0$ given in~\eqref{Heat_B_h0formula} is the minimum norm solution.

\end{proof}

Note that by Proposition~\ref{prop:product_of_Uf12} and Proposition~\ref{prop:product_of_Uf1phi}, both $W^*W$ and $W^*\psi$ can be computed from $\Lambda_\mu$. As the latter is computable from the data $r|_{\mathbb{Z}_{2T}\setminus\{0\} \times B \times B}$ by Lemma~\ref{Lemma_Uf_by_combinations_by_r}, we see that $h_0$ can be explicitly computed from the data without knowing $\mu|_{X\setminus B}$.

\subsection{Reconstruction of the vertex centrality on \texorpdfstring{$X\setminus B$}.}

The rest of the approach closely follows the idea in our earlier work \cite{MR4907044}, so we just outline the idea here.

Define the space spanned by products of functions that are harmonic on $X\setminus B$:
\begin{align*}
Q :=\operatorname{span}\{(\varphi\psi)|_{X\setminus B}~:~\varphi,\psi\in \ell^2(X),~\Delta \varphi(x)=\Delta\psi(x)=0,~x\in X\setminus B\}. 
\end{align*}
Note that $Q$ is a subspace of $\ell^2(X \setminus B)$, and the definition of $Q$ is independent of $\mu_{X \setminus B}$ since the functions that are harmonic on $X \setminus B$ are independent of $\mu|_{X \setminus B}$, see Remark~\ref{Remark_harmonic_undependmu}.
Given two functions $\varphi,\psi$ that are harmonic on $X\setminus B$, we have
\begin{align*}
\sum_{x\in B\cup (X\setminus B)} \mu_x \psi(x) \varphi(x) 
&= \langle \psi, \varphi \rangle_X 
= \langle U^{h_0}(T), \varphi \rangle_X 
= \langle W h_0, \varphi \rangle_X 
= \langle h_0, W^* \varphi \rangle_{\mathbb{Z}_T \times B},
\end{align*}
where $h_0$ is chosen as in Proposition \ref{Heat_B_harmonic_explicit h0}. This implies
\begin{align}\label{Heat_miureconstruction_formula}
\sum_{x \in X \setminus B} \mu_x \psi(x) \varphi(x)
= \langle h_0, W^* \varphi \rangle_{\mathbb{Z}_T \times B}
- \sum_{x \in B} \mu_x \psi(x) \varphi(x).
\end{align}
As the right hand side can be computed from the data $r|_{\mathbb{Z}_{2T}\setminus\{0\} \times B \times B}$ and the known $\mu|_{B}$, we can take $\varphi,\psi = \varphi^{(j)}$ with $j=1,2,\dots, |B|$, respectively, to obtain a linear system, from which $\mu|_{X\setminus B}$ can be solved. In summary, we have proved:

\begin{theorem}\label{Heat_orthogonal_projection_of_mu_onto_Q}
Let $\mathbb{G} = (X,\mathcal{E},\mu,w)$ satisfy Assumption~\ref{Assum_eigenfunction}. 
Suppose $(X,\mathcal{E},\mu|_{B},w)$ are given, and $T \geq |X|$. Then the orthogonal projection of $\mu|_{X \setminus B}$ onto the subspace $Q \subset \ell^2(X \setminus B)$ can be explicitly reconstructed from the data $r|_{\mathbb{Z}_{2T}\setminus\{0\} \times B \times B}$. 
\end{theorem}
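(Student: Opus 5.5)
The plan is to assemble the identities already established into a computable pairing of $\mu|_{X\setminus B}$ with every element of $Q$, and then read off the orthogonal projection. The natural pairing is the Euclidean one on $X\setminus B$, $(\mu,q):=\sum_{x\in X\setminus B}\mu_x q(x)$. This inner product does not involve the unknown weights, so $Q$ and its orthogonal complement are data-independent objects by Remark~\ref{Remark_harmonic_undependmu}.

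\textbf{Step 1: the data determine $\Lambda_\mu$.} By Lemma~\ref{Lemma_Uf_by_combinations_by_r}, for every $f\in\ell^2(\mathbb Z_T\times B)$ the restriction $U^f|_{\mathbb Z_{2T}\times B}$ is determined by $f$ and $r|_{\mathbb Z_{2T}\setminus\{0\}\times B\times B}$. Hence both $\Lambda_\mu$ and $P_TR_{2T-1}U^{f}$ are known. Since the inner product on $\ell^2(\mathbb Z_T\times B)$ uses only $\mu|_B$, which is given, Proposition~\ref{prop:product_of_Uf12} shows that $W^*W$ is known as a matrix in the $\mu|_B$-weighted inner product.

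\textbf{Step 2: the harmonic functions and their images are known.} For harmonic $\varphi$ on $X\setminus B$, the function $\varphi$ itself is computable from $\varphi|_B$ without knowing $\mu|_{X\setminus B}$. It is the unique solution of the Dirichlet problem \eqref{Dirichlet_problem}, a linear combination of the $\varphi^{(j)}$. Moreover $P_B\Delta\varphi$ involves only $\mu|_B$. So \eqref{Heat_Wstar_formula} gives $W^*\varphi$ explicitly.

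\textbf{Step 3: controls.} For each harmonic $\psi$, Proposition~\ref{Heat_B_harmonic_explicit h0} (using $T\ge|X|$ and Lemma~\ref{span_space_l2X_heat}) gives $h_0=(W^*W)^\dagger W^*\psi$, computable from Steps 1–2, with $Wh_0=\psi$. Then \eqref{Heat_miureconstruction_formula} yields $(\mu,(\varphi\psi)|_{X\setminus B})$ from data and $\mu|_B$ alone.

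\textbf{Step 4: read off the projection.} By bilinearity, it suffices to take $\varphi=\varphi^{(i)}$ and $\psi=\varphi^{(j)}$, since every harmonic function is a combination of these. The products $q_{ij}:=(\varphi^{(i)}\varphi^{(j)})|_{X\setminus B}$, $1\le i\le j\le|B|$, therefore span $Q$, and the numbers $c_{ij}:=(\mu,q_{ij})$ are known. Choose a basis $q_1,\dots,q_m$ of $Q$ among the $q_{ij}$, and form the Gram matrix $G_{kl}=(q_k,q_l)$, which is invertible and $\mu$-independent. The projection is then $\sum_{k}a_kq_k$ with $a=G^{-1}c$.

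The main obstacle is essentially bookkeeping. One must confirm that every ingredient, including the adjoint $W^*$ and the pseudo-inverse, is taken in inner products that involve only $\mu|_B$. This holds because the domain of $W$ lives on $B$, and the target inner product on $X$ enters only through the identities of Propositions~\ref{prop:product_of_Uf12} and \ref{prop:product_of_Uf1phi}, which have already been converted into data. Note also that the minimum-norm choice of $h_0$ is inessential: any $h$ with $Wh=\psi$ gives the same value $\langle h,W^*\varphi\rangle=\langle\psi,\varphi\rangle_X$.
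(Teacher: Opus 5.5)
Your proposal is correct and follows the paper's argument: pairing the control $h_0=(W^*W)^\dagger W^*\psi$ with $W^*\varphi$ yields $\langle\psi,\varphi\rangle_X$ from the data and $\mu|_B$. Through \eqref{Heat_miureconstruction_formula}, applied to the basis $\varphi^{(j)}$, this gives the pairings of $\mu|_{X\setminus B}$ with a spanning set of $Q$. Your Gram-matrix step is an explicit version of the paper's minimum-norm solution of the system with coefficient matrix $\mathbf{H}^\top$. Your choice of the unweighted pairing on $X\setminus B$ is the right reading of ``orthogonal projection'' here, since that is the projection the minimum-norm solution produces.
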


Theorem~\ref{Heat_orthogonal_projection_of_mu_onto_Q} only reconstructs the orthogonal projection of $\mu|_{X \setminus B}$ onto $Q$. To recover the full vertex centrality $\mu|_{X \setminus B}$, we need additional conditions on $X \setminus B$ and the edge weight function $w$. Let us again index the vertices in $X \setminus B$ as $x_1, \dots, x_{|X \setminus B|}$ and those in $B$ as $x_{|X \setminus B|+1}, \dots, x_{|X|}$.
Using the indexing, a function $g \in \ell^2(X)$ can be identified with a vector $\vec{g} = (g(x_1), \dots, g(x_{|X|}))^\top \in \mathbb{R}^{|X|}$ via the isomorphism
\begin{equation} \label{eq:Heat_identity}
\ell^2(X) \ni g \leftrightarrow \vec{g} := 
\begin{pmatrix}
\vec{g}_{X \setminus B} \\
\vec{g}_{B}
\end{pmatrix}
\in \mathbb{R}^{|X \setminus B|} \times \mathbb{R}^{|B|},
\end{equation}
where $(\cdot)^\top$ denotes transpose.
Let $\vec{\varphi}^{(j)}$ denote the vector representation of the unique solution to the boundary value problem \eqref{Dirichlet_problem}.
We define the space of functions that are harmonic on $X \setminus B$ in the vector form as:
\begin{equation}\label{Heat_harmonic_span}
\vec{H} := \operatorname{span} \left\{ \vec{\varphi}^{(j)} \in \mathbb{R}^{|X|} : j = 1, 2, \dots, |B| \right\}.
\end{equation}
Construct a matrix 
\[ \mathbf{H} \in \mathbb{R}^{|X\setminus B|\times \frac{|B|(|B|+1)}{2}} \] 
whose columns are the vectors
\( \vec{\varphi}|_{X\setminus B}^{(j)}\odot \vec{\varphi}|_{X\setminus B}^{(k)} \) with \( 1\le j\le k\le |B| \), where \(\odot\) denotes the Hadamard product. 
By Lemma 5.2 and Remark 5.3 in \cite{MR4907044}, we have 
\begin{equation} \label{eq:QequivH}
Q=\ell^2(X\setminus B)  \quad \text{ if an only if } \quad \operatorname{rank}({\bf{H}})=|X\setminus B|.
\end{equation}
Note that a necessary condition for $\operatorname{rank}({\bf{H}})=|X\setminus B|$ is $\frac{|B|(|B|+1)}{2} \ge |X \setminus B|$, which requires the set $B$ to contain sufficient vertices.

Using the vertex ordering, we can list the edges lexicographically and view the edge weight function $w$ as a vector with positive entries, that is $w\in \mathbb{R}_+^{|\mathcal{E}|}$. Since the Laplacian operator is defined in terms of $w$, the matrix $\mathbf{H}=\mathbf{H} (w)$ can be regarded as a matrix-valued function of $w\in \mathbb{R}^{|\mathcal{E}|}_+$. Applying \cite[Proposition 5.4]{MR4907044}, we obtain the following reconstruction for a generic set of $\omega$.

\begin{corollary} \label{coro_dense_for_Heatweight}
Let $\mathbb{G}=(X,\mathcal{E}, w,\mu)$ satisfy Assumption \ref{Assum_eigenfunction}. 
Suppose $(X,\mathcal{E},\mu|_{B},w)$ are given, $T\ge |X|$ and $\frac{|B|(|B|+1)}{2}\geq |X\setminus B|$.
If $Q=\ell^2(X\setminus B)$ holds for at least one edge weight $w$, then it holds for all edge weights except a set of measure zero in $\mathbb{R}_+^{|\mathcal{E}|}$. As a result, Theorem~\ref{Heat_orthogonal_projection_of_mu_onto_Q} reconstructs $\mu|_{X\setminus B}$ for all edge weights except a set of measure zero in $\mathbb{R}_+^{|\mathcal{E}|}$.
\end{corollary}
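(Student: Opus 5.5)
The plan is the standard analytic-genericity argument, mirroring \cite[Proposition 5.4]{MR4907044}. First, I express the harmonic basis functions $\varphi^{(j)}$ explicitly in terms of $w$. Write the weighted Laplacian (multiplied by $\mu$, which by Remark~\ref{Remark_harmonic_undependmu} does not affect harmonicity) in block form with respect to the ordering \eqref{eq:Heat_identity}:
\[
L(w)=\begin{pmatrix} L_{II}(w) & L_{IB}(w)\\ L_{BI}(w) & L_{BB}(w)\end{pmatrix},
\qquad (L(w)g)(x)=\sum_{y\sim x}w_{x,y}(g(y)-g(x)).
\]
Harmonicity on $X\setminus B$ together with $\varphi^{(j)}|_B=\delta^{(j)}$ reads $L_{II}(w)\vec\varphi^{(j)}_{X\setminus B}=-L_{IB}(w)\delta^{(j)}$. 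The uniqueness result \cite[Lemma 10.1]{MR4907044} says that $L_{II}(w)$ is invertible for every $w\in\mathbb{R}_+^{|\mathcal{E}|}$; this is the Dirichlet Laplacian on $X\setminus B$ of a connected graph. By Cramer's rule, each entry of $\vec\varphi^{(j)}|_{X\setminus B}$ is a rational function $p(w)/\det L_{II}(w)$ with polynomial numerator. The denominator does not vanish on $\mathbb{R}_+^{|\mathcal{E}|}$.

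Second, the columns of $\mathbf{H}(w)$ are Hadamard products of such vectors. Hence $\det(L_{II}(w))^2\,\mathbf{H}(w)$ has polynomial entries in $w$. Consider the finitely many $|X\setminus B|\times|X\setminus B|$ minors of this rescaled matrix; these exist because $\frac{|B|(|B|+1)}{2}\ge|X\setminus B|$. Let $S(w)$ be the sum of their squares. Then $S$ is a polynomial in $w$, and $S(w)\neq0$ if and only if $\operatorname{rank}\mathbf{H}(w)=|X\setminus B|$. By \eqref{eq:QequivH}, this holds if and only if $Q=\ell^2(X\setminus B)$.

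Third, I use the hypothesis. If $Q=\ell^2(X\setminus B)$ for some $w_0$, then $S(w_0)\ne0$, so $S$ is a nonzero polynomial. The zero set of a nonzero polynomial on $\mathbb{R}^{|\mathcal{E}|}$ has Lebesgue measure zero, which follows by a standard induction on the number of variables using Fubini. Therefore $Q=\ell^2(X\setminus B)$ for all $w\in\mathbb{R}_+^{|\mathcal{E}|}$ outside a null set.

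Finally, for such $w$, the orthogonal projection onto $Q$ is the identity. Theorem~\ref{Heat_orthogonal_projection_of_mu_onto_Q} applies since Assumption~\ref{Assum_eigenfunction} and $T\ge|X|$ are assumed, and it therefore yields all of $\mu|_{X\setminus B}$. Concretely, the linear system from \eqref{Heat_miureconstruction_formula} with $\varphi=\varphi^{(j)}$, $\psi=\varphi^{(k)}$ reads $\mathbf{H}^\top\vec\mu_{X\setminus B}=\text{data}$. When $\mathbf{H}$ has full row rank, this system has a unique solution.

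The main technical point is the first step: the polynomial dependence must hold uniformly on the entire domain. This rests on the nonvanishing of $\det L_{II}(w)$ for all positive weights. I would justify it via the maximum principle, or via positive-definiteness of $-L_{II}$, using connectivity and $B\neq\emptyset$.
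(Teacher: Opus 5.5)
Your proof is correct and follows essentially the same route as the paper. Both arguments note that $\mathbf{H}(w)$ depends rationally on $w$, identify the rank-deficient set with the common zero set of the $|X\setminus B|\times|X\setminus B|$ minors, and conclude it is null because one minor is nonzero at the given $w_0$. Your version is more careful than the paper's: you clear the denominator $\det L_{II}(w)$, prove it never vanishes on $\mathbb{R}_+^{|\mathcal{E}|}$, and combine the minors into the single polynomial $S(w)$.
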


\begin{proof}
Let $\beta$ denote an arbitrary selection of $|X\setminus B|$ columns from $\mathbf{H}$. Define
$$
S_\beta := \{w\in \mathbb{R}_+^{|\mathcal{E}|} : \det(\mathbf{H}_{:,\beta})=0,\quad \text{and } ~{\mu_x}\geq \sum\limits_{\substack{y\in X\\ y\sim x}}w_{xy}\}.
$$
As $\det(\mathbf{H}_{:,\beta})$ is a rational function of $w$, its zero set $S_\beta$ is either the entire $\mathbb{R}_+^{|\mathcal{E}|}$ or a set of measure zero. The collection of $w$ that ensures ${\rm rank}(\mathbf{H}) < |X\setminus B|$ is
\begin{align*}
     &\quad \{w\in \mathbb{R}_+^{|\mathcal{E}|} : {\rm rank}(\mathbf{H}) < |X\setminus B|,\quad \text{and } ~{\mu_x}\geq \sum\limits_{\substack{y\in X\\ y\sim x}}w_{xy}\} \\
      & = \{w\in \mathbb{R}_+^{|\mathcal{E}|} : \det(\mathbf{H}_{:,\beta})=0, \; \forall\beta,\quad \text{and } ~{\mu_x}\geq \sum\limits_{\substack{y\in X\\ y\sim x}}w_{xy}\} \\
       & = \bigcap_\beta S_\beta,
\end{align*}
which is a finite intersection of $S_\beta$. If $Q=\ell^2(X\setminus B)$ for at least one $w$, then $\operatorname{rank}({\bf{H}})=|X\setminus B|$ for such $w$ by~\eqref{eq:QequivH}, hence the intersection cannot be the entire space, which must be a set of measure zero. When $Q=\ell^2(X\setminus B)$, the orthogonal projection of $\mu|_{X\setminus B}$ onto $Q$ is just $\mu|_{X\setminus B}$ itself.
\end{proof}

Numerical reconstruction of $\mu|_{X\setminus B}$ is summarized in Algorithm~\ref{alg:Framwork_of_Heateq} in Section~\ref{Sec:Heat_Reconstruction_algorithm}.

\section{Reconstruction algorithm}\label{Sec:Heat_Reconstruction_algorithm}

This section presents the implementation and validation of the reconstruction procedure,  summarized in Algorithm~\ref{alg:Framwork_of_Heateq}. Recall that the graph $\mathbb{G}=(X,\mathcal{E},\mu,w)$ satisfies Assumption \ref{Assum_eigenfunction}.

\begin{algorithm}[htb]
\caption{Reconstruction Algorithm for $\mu|_{X\setminus B}$.}
\label{alg:Framwork_of_Heateq}
\begin{algorithmic}[1] 
\REQUIRE $\mathbb{G} = (X,\mathcal{E},\mu,w)$ that satisfies the assumptions in Corollary \ref{coro_dense_for_Heatweight}; the distribution of the first passage times $r|_{\mathbb{Z}_{2T}\setminus \{0\} \times B \times B}$.

    \STATE Compute $U^f|_{\mathbb{Z}_{2T}\times B}$ from the distributions of the first passage times via
\small{\begin{align*}
U^f(t,x) =
\begin{cases}
0, \quad & t=0, \\
f(0,x),\quad &t=1,\\
\sum\limits_{y\in B}f(t-2,y)\cdot r(1,x,y)+\sum\limits_{s=1}^{t-2}\sum\limits_{y\in B} \bigg\{f(s-1,y)\Big( r(t-s,x,y)\\
\quad +\sum\limits_{j=2}^{t-s}\sum\limits_{1\leq t_1< t_2<\cdots<t_j=t-s} r(t_1,x,y)\cdot  \prod\limits_{i=2}^jr(t_i-t_{i-1},y,y) \Big) \bigg\} + f(t-1,x),\quad & t\geq 2
\end{cases}
\end{align*}}
for $(t,x)\in \mathbb{Z}_{2T}\times B$  (see \eqref{Uf_by_combinations_by_r}). 

    \STATE Compute the operator $W^*W$ by varying $f$ in $W^*W{f}(t,x)=P_T R_{2T-1}U^{f}$ for $(t,x)\in\mathbb{Z}_T\times B$, see \eqref{Heat_Inner_product_of_U^f}.
    
    \STATE Compute the functions $W^* {\varphi}^{(j)}$ ($j=1,\cdots, |B|$) by
    \begin{align*}
    W^*\varphi^{(j)}=R_{T-1} \Lambda_{\mu} R_{T-1} P_B \Delta \varphi^{(j)}+P_B \varphi^{(j)},
    \end{align*}
    see \eqref{Heat_Wstar_formula} and \eqref{Dirichlet_problem}.

    \STATE Compute $h^{(j)}_0 = (W^*W)^\dagger W^* \varphi^{(j)}$ for each  $j=1,\cdots, |B|$,  see \eqref{Heat_B_h0formula}. 
    
       \STATE Solve the linear equations with $1\leq j\le k\leq |B|$:
\small{\begin{align*}
\sum_{x\in X\setminus B}\mu_x\varphi^{(j)}(x)\varphi^{(k)}(x)= \langle h^{(j)}_0,W^*\varphi^{(k)}\rangle_{\mathbb{Z}_T\times B}-\sum_{x\in B}\mu_x\varphi^{(j)}(x)\varphi^{(k)}(x),
\end{align*}}
see \eqref{Heat_miureconstruction_formula}.

\RETURN $\mu|_{X\setminus B}$.

\vspace{1ex}
\ENSURE The vertex centrality $\mu|_{X\setminus B}$. 

\end{algorithmic}
\end{algorithm}

We introduce a vectorization process that converts functions on graphs into vectors for the implementation of Algorithm \ref{alg:Framwork_of_Heateq}.
For a time-independent function $g\in\ell^2(X)$, we continue to use the vectorization described in \eqref{eq:Heat_identity}.

For a time-dependent function $u\in \ell^2(\mathbb{Z}_T\times X)$, we use the lexicographical order to identify
\begin{align*}
u \leftrightarrow \vec{u} := & (u(0,x_1), u(0,x_2), \cdots, u(0,x_{|X|}), u(1,x_1), u(1,x_2),\cdots, u(1,x_{|X|}), \dots, \\
 & u(T-1,x_1), u(T-1,x_2), \dots, u(T-1,x_{|X|}))^\top.
\end{align*}
Similarly, we can vectorize a function in
$\ell^2(\mathbb{Z}_T\times (X\setminus B))$ or $\ell^2(\mathbb{Z}_T\times B)$ following the ordering of vertices in $X$ as above.
Sometimes, we need to extend $g\in\ell^2(B)$ to a function in $\ell^2(\mathbb{Z}_T\times B)$ that is constant in the variable $t$. Following the lexicographical order, such extension is vectorized as $\mathbf{1}_T\otimes \vec{g}$, where $\mathbf{1}_T := (1,\dots,1)^\top \in\mathbb{R}^T$ denotes the vector of all 1's, and $\otimes$ is the vector tensor product. Based on this ordering of the vertices, linear operators in the algorithm are realized as matrices. We put a square parenthesis $[\cdot]$ around the linear operator to denote its matrix realization.

Implementation of Algorithm~\ref{alg:Framwork_of_Heateq} consists of the following steps:

{\bf{Step 1: Assemble 
$\overrightarrow{{U^{f}}}$   and the matrix $[\Lambda_{\mu}]$.}} Given any $f\in\ell^2(\mathbb{Z}_T\times B)$ with the vectorization $\vec{f}\in \mathbb{R}^{T |B|}$ and the data $r|_{\mathbb{Z}_{2T}\setminus\{0\} \times B \times B}$, we use the formula \eqref{Uf_by_combinations_by_r} to obtain $U^f|_{\mathbb{Z}_{2T}\times B}$ and its vectorization $\overrightarrow{{U^{f}}}\in\mathbb{R}^{2T|B|}$. 
Since the matrix $[\Lambda_{\mu}]\in\mathbb{R}^{T|B|\times T|B|}$ satisfies
$[\Lambda_{\mu}] \vec{f} = [P_T] \overrightarrow{{U^{f}}}$ by \eqref{Map_of_source to solution_of_Nheat}, ranging $\vec{f}$ over all the standard basis vectors in $\mathbb{R}^{T |B|}$ gives $[\Lambda_{\mu}]$.

{\bf{Step 2: Construct the matrix $[W^{*}W]$.}} 
For any $f\in\ell^2(\mathbb{Z}_T\times B)$ with vectorization $\vec{f}\in \mathbb{R}^{T |B|}$, the matrix $[W^{*}W]\in \mathbb{R}^{T |B|\times  T|B|}$ is determined by the following relation in Proposition~\ref{prop:product_of_Uf12}:
\begin{align}\label{Heat_WstarW}
[W^*W]\vec{f} = [P_T] [R_{2T-1}] \overrightarrow{{U^{f}}}
\end{align}
where $[P_T]\in \mathbb{R}^{T|B|\times 2T|B|}$ and $[R_{2T-1}]\in \mathbb{R}^{2T|B|\times 2T|B|}$. Ranging $\vec{f}$ over all the standard basis vectors in $\mathbb{R}^{T |B|}$ gives the matrix $[W^*W]$.

{\bf{Step 3: Compute the matrix-vector product $[W^*] \vec{\varphi}^{(j)}$ for $j=1,2,\dots,|B|$.}} For each $j$, the solution $\varphi^{(j)}$ of \eqref{Dirichlet_problem} is vectorized as $\vec{\varphi}^{(j)}\in\mathbb{R}^{|X|}$. We use \eqref{Heat_Wstar_formula} to get
$$
[W^*] \vec{\varphi}^{(j)} = [R_{T-1}] [\Lambda_{\mu}] [R_{T-1}] \left( \mathbf{1}_T \otimes [P_B] [\Delta] \vec{\varphi}^{(j)} \right) + \mathbf{1}_T \otimes  [P_B] \vec{\varphi}^{(j)}.
$$
Here, $[R_{T-1}]\in\mathbb{R}^{T|B|\times T|B|}$, $[\Lambda_{\mu}]\in\mathbb{R}^{T|B|\times T|B|}$, $[P_B]\in\mathbb{R}^{|B|\times |X|}$, $[\Delta]\in\mathbb{R}^{|X|\times |X|}$. The tensor product $\mathbf{1}_T \otimes \cdot$ extends a vectorized spatial function to a vectorized spatial-temporal function.

{\bf{Step 4: Determine the control function $\vec{h}_0^{(j)}$ for each $\vec{\varphi}^{(j)}$.}} Using Proposition~\ref{Heat_B_harmonic_explicit h0}, we compute
\begin{align}\label{Heat_h0}
\vec{h}^{(j)}_0 := [W^* W]^\dagger [W^*] \vec{\varphi}^{(j)} \qquad \in \mathbb{R}^{T |B|}
\end{align}
for each $j=1,2,\dots,|B|$. This is possible as the matrix $[W^* W]$ and the vector $[W^*] \vec{\varphi}^{(j)}$ have been obtained in the previous steps.

{\bf{Step 5: Reconstruct  $\vec{\mu}_{X\setminus B}$.}}
For $1\le j \le k \le |B|$, we solve the following linear system arising from \eqref{Heat_miureconstruction_formula}:
\begin{equation}\label{Heat_mu}
 (\vec{\varphi}^{(j)}_{X\setminus B} \odot \vec{\varphi}^{(k)}_{X\setminus B})^\top \vec{\mu}_{X\setminus B}= \vec{h}^{ (j)\top}_0 [W^*] \vec{\varphi}^{(k)} - (\vec{\varphi}^{(j)}_B \odot \vec{\varphi}^{(k)}_B)^\top\vec{\mu}_B, \qquad 1\leq j\leq k\leq |B|.
\end{equation}
Note that the coefficient matrix of this linear system is $\mathbf{H}^\top$. If this equation admits a unique solution, we can recover $\vec{\mu}_{X\setminus B}$. Otherwise, the minimum norm solution gives the orthogonal projection of $\vec{\mu}_{X\setminus B}$ onto $Q$.

\section{Numerical experiment}\label{Sec:Heat_Numerical_experiment}

In this section, we validate the algorithm on several numerical examples using MATLAB and Python.
The experimental equipment used is a laptop with the following hardware specifications: CPU: Intel(R) Core(TM) i7-10510U @1.80GHz and RAM: 16.0 GB.

Given a  graph $\mathbb{G}=(X,\mathcal{E},\mu,w)$ with known vertex centrality and edge weight, we first simulate an \textit{empirical distribution of the first passage times}, denoted by 
$$
r'(t,x,y), \qquad (t,x,y) \in\mathbb{Z}_{2T}\setminus\{0\}\times B \times B.
$$ 
The simulation is done by the Monte Carlo method, and the empirical distribution is used later as the approximate measurement data to validate Algorithm~\ref{alg:Framwork_of_Heateq}. 
For a fixed pair of vertices \( x,y \in B \), we first simulate the function $\tau(x,y)$. Each simulation sets up a random walk that begins at $x$, then randomly selects the next adjacent vertex to be visited according to the transition probabilities~\eqref{Heat_transition_proba_def}. If $y$ is visited, the current step count is recorded as $\tau(x,y)$ and the simulation is terminated. If $y$ is never visited within time $2T-1$, $\tau(x,y)$ is set to infinity. Therefore, the possible values for $\tau(x,y)$ are $1,2,\dots,2T-1$ and $\infty$. We repeat such simulation $10^6$ times and record the histogram of the possible values. The histogram with $t<\infty$ gives $r'(t,x,y)$ for fixed $x,y\in B$. Repeating this process for all $x,y\in B$ yields $r'(t,x,y)$ with $(t,x,y) \in\mathbb{Z}_{2T}\setminus\{0\}\times B \times B$, which is used as approximate data in the following experiments.

We would like to evaluate the discrepancy of the empirical distribution of the first passage times $r'|_{\mathbb{Z}_{2T}\setminus\{0\}\times B \times B}$ related to the genuine distribution $r|_{\mathbb{Z}_{2T}\setminus\{0\}\times B \times B}$. For this purpose, given a ground-truth vertex centrality $\mu$ on $X$ and the edge weight $w$ on $\mathcal{E}$, we compute the transition probabilities $p_{xy}$ by~\eqref{Heat_transition_proba_def}, then compute $r$ based on the following recursive relation (e.g, see \cite[Chapter 4.8]{IBE201359} for an account): 
\begin{align*}
r(t,x,y)=
\begin{cases}
 p_{xy},\quad & t=1,\\
\sum\limits_{X\ni z\neq y}p_{xz}\cdot r(t-1,z,y),\quad & t = 2,3,\dots
\end{cases} \qquad x,y\in X.
\end{align*}
The discretized empirical and genuine distributions of the first passage times are denoted by $[r'],[r]\in \mathbb{R}^{(2T-1)\times|B|\times|B|}$, respectively. 
We will use the Frobenius relative norm error ($\mathrm {FRNE}$)
$$
\mathrm {FRNE}=\frac{\lVert [r]-[r']  \rVert_{F} }{\lVert [r]\rVert_{F} }*100\%
$$
as a metric to quantify the relative discrepancy between the empirical and genuine distributions.

On the other hand, denote the discretized ground-truth vertex centrality and the reconstruction by $\vec{\mu}_x$ and $\vec{\mu}_x'$, respectively.
We will use both the absolute error 
\[\mathrm {Error}:=|\vec{\mu}_x -\vec{\mu}_x'|\]
and the $L_2$-relative norm error ($\mathrm {L_2RNE}$)
\[\mathrm {L_2RNE}:=\frac{\lVert \vec{\mu}_x-\vec{\mu}_x'\rVert_{2} }{\lVert\vec{\mu}_x \rVert_{2} }* 100\%\]
to quantify the accuracy of the reconstruction.

In the following, Algorithm~\ref{alg:Framwork_of_Heateq} is validated using numerical experiments. The graphs in these experiments are limited in size primarily due to the algorithmic complexity, which grows exponentially in the number of vertices $|X|$, see Remark~\ref{Operation_complexity_of_Uf}. Another limiting factor for the size of the graphs comes from the Monte Carlo simulation to generate $r'$. It is well known that the Monte Carlo simulation converges at the rate of $O(N^{-\frac{1}{2}})$, where $N$ is the number of samples. This slow rate of convergence requires a significant number of random simulations to obtain a reasonably accurate $r'$ as the approximate data. In our experiments, we run the simulation $10^6$ times for each fixed pair of $x,y\in B$ in order to gain three significant digits of accuracy.

\subsection{Experiment 1: a graph with eight vertices}

This experiment tests Algorithm~\ref{alg:Framwork_of_Heateq} when the ground-truth $\mu$ is a constant. We choose $|B|=3, |X\setminus B|=5$, $w_{x,y}=0.25, T=9$, and the ground-truth vertex centrality is $\mu_x= 1$ for all $x \in X$. Note that the choice of these values satisfies the assumption in Corollary~\ref{coro_dense_for_Heatweight}. 
The graph is shown in Fig. \ref{fig_8vertex_graph}.
The empirical and genuine distributions are computed as explained above. The FRNE between $[r']$ and $[r]$ is $0.248499\%$.

\begin{figure}[htbp]
\centering

\includegraphics[scale=0.5]{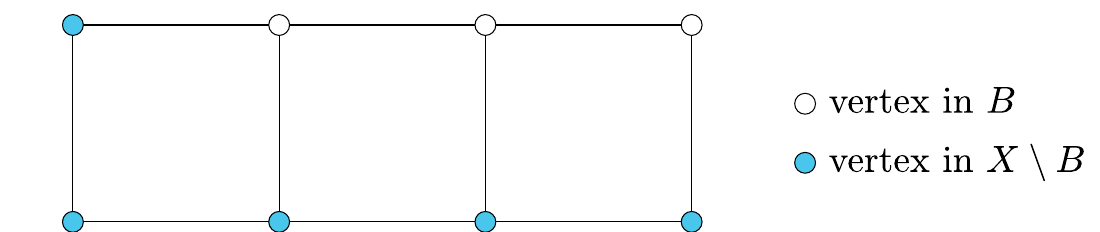}
\caption{\small\small{$X$ and $B$ in Experiment 1.}}
\label{fig_8vertex_graph}
\end{figure}

Following Algorithm \ref{alg:Framwork_of_Heateq}, we compute the vector $\overrightarrow{{U^{f}}} \in\mathbb{R}^{54}$ and the matrix $[\Lambda_{\mu}]\in\mathbb{R}^{27\times 27}$ by \eqref{Uf_by_combinations_by_r}, the matrix $[W^{*} W]\in \mathbb{R}^{27\times 27}$ by \eqref{Heat_h0}, and the matrix $\mathbf{H}\in \mathbb{R}^{5\times 6}$ by \eqref{Heat_mu}. 
In particular, we observe that $[W^{*} W]$ and $\mathbf{H}$ are ill-conditioned. Their singular values are shown in Fig. \ref{fig_Heat_8vetertex_singular}.
The ill-conditioning indicates that suitable regularization is needed when solving the least-squares problem \eqref{Heat_h0} for $\vec{h}_0$ and the least-squares problem \eqref{Heat_mu} for $\vec{\mu}|_{X\setminus B}$. Here, we utilize the rank-revealing QR method (with column pivoting) to solve these least-squares problems. Specifically, given a matrix $A=[W^{*} W] \text{ or } \mathbf{H}^\top$, this method performs a column-pivoted QR decomposition on $A$ and zeros out the diagonal entries in the triangular matrix that are under a certain threshold ``\textit{tol}'', reducing the problem to a simpler triangular form that can be solved for the minimum-norm solution. 
In the experiments, this method is implemented using the MATLAB command `\textit{lsqminnorm}' with the threshold `\textit{tol = 0.0005}'.
The reconstruction and the error of vertex centrality are shown in Fig. \ref{fig_Heat_8vetertex_miu_rec}. The relative reconstruction error is $\mathrm {L_2RNE}=0.507423\%$.

\begin{figure}[htbp]
\centering

\includegraphics[scale=0.56]{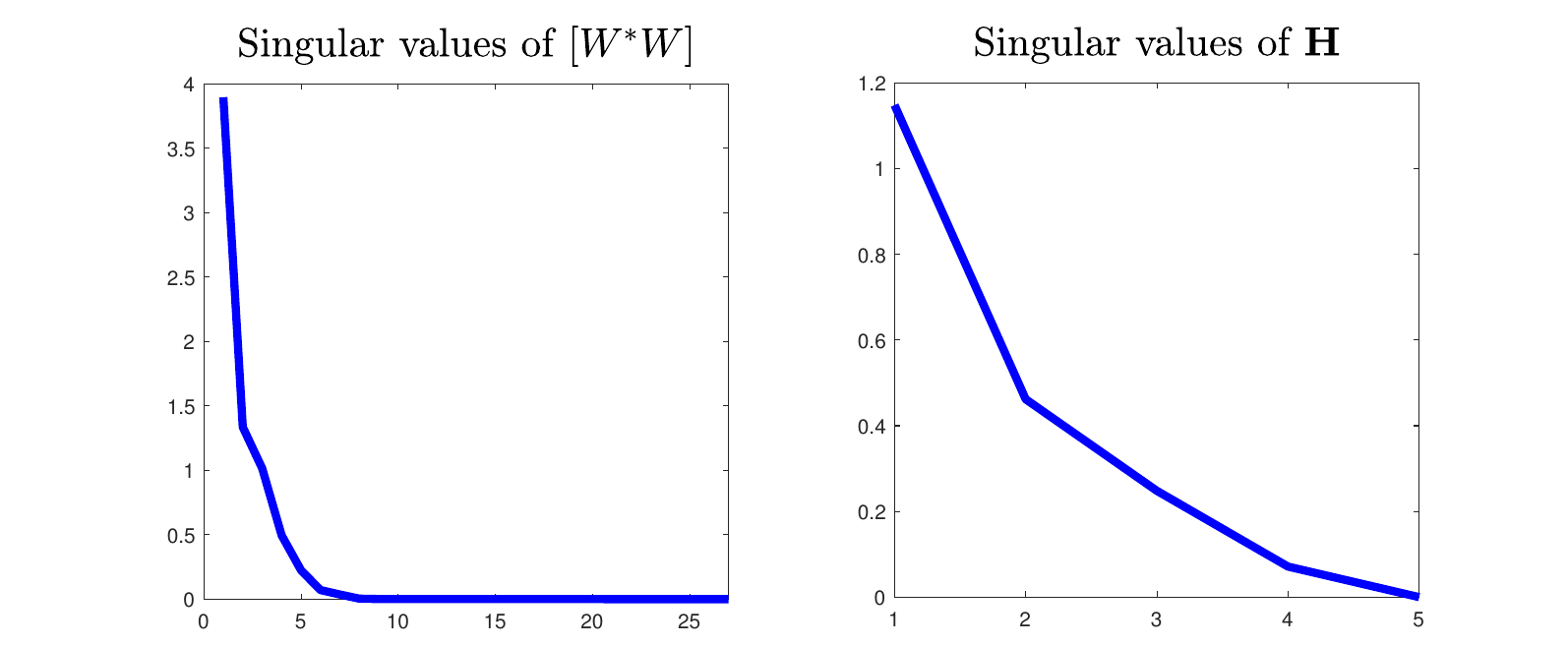}
\caption{\small\small{The singular values of $[W^{*}W]$ and $\mathbf{H}$ in Experiment 1. The minimum singular values are $1.6083*10^{-5}$ and  $1.5844*10^{-17}$, respectively.}}
\label{fig_Heat_8vetertex_singular}
\end{figure}

\begin{figure}[htbp]
\centering

\includegraphics[scale=0.5]{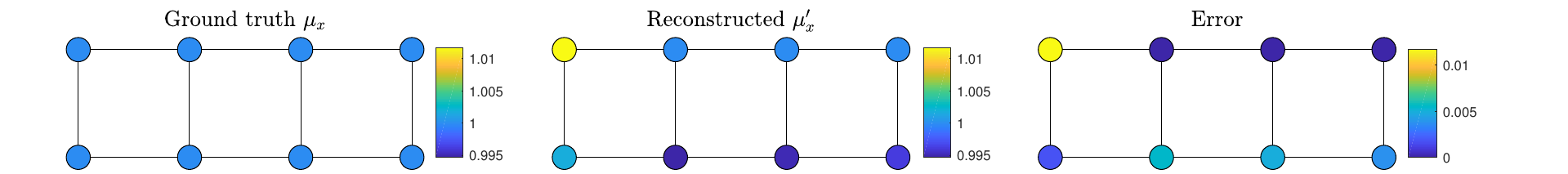}
\caption{\small\small{The ground-truth $\mu_x$, the reconstructed $\mu_x'$ and the absolute errors  in Experiment 1.}}
\label{fig_Heat_8vetertex_miu_rec}
\end{figure}

\subsection{Experiment 2: the graph with nine vertices}

This experiment tests Algorithm~\ref{alg:Framwork_of_Heateq} when the ground-truth $\mu$ is varying. We choose $|B|=3$, $|X\setminus B|=6$, $w_{x,y}=1, T=9$, and the ground-truth vertex centrality is $\mu_x= {\rm {deg}} (x)$ for all $x \in X$, where the degree ${\rm {deg}} (x)$ is defined as the number of edges connected to the vertex $x$. Note that the choice of these values satisfies the assumption in Corollary~\ref{coro_dense_for_Heatweight}. 
The graph is shown in Fig. \ref{fig_9vertex_graph}.
The FRNE between $[r']$ and $[r]$ is $0.188348\%$.

\begin{figure}[htbp]
\centering

\includegraphics[scale=0.52]{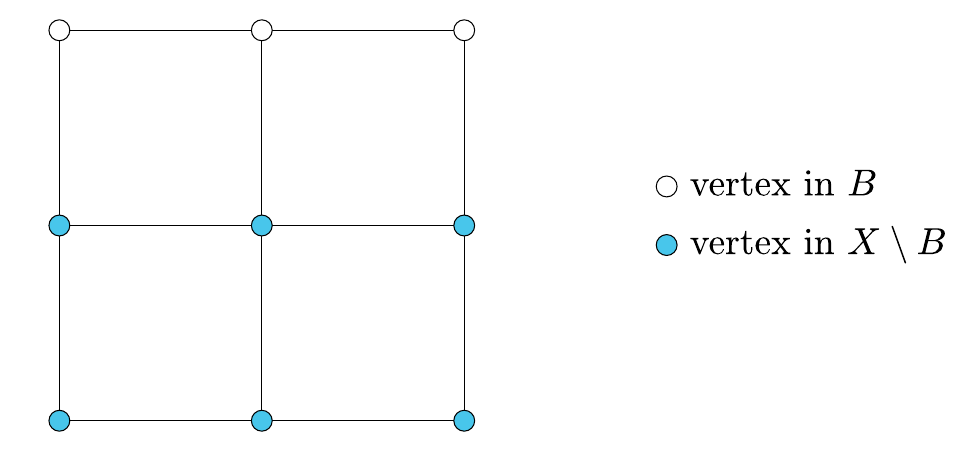}
\caption{\small\small{$X$ and $B$ in Experiment 2.}}
\label{fig_9vertex_graph}
\end{figure}

This time, we have $\overrightarrow{{U^{f}}} \in\mathbb{R}^{54}$, $[\Lambda_{\mu}]\in\mathbb{R}^{27\times 27}$ by \eqref{Uf_by_combinations_by_r}, $[W^{*} W]\in \mathbb{R}^{27\times 27}$ by \eqref{Heat_h0}, and $\mathbf{H}\in \mathbb{R}^{6\times 6}$ by \eqref{Heat_mu}. The matrices $[W^{*} W]$ and $\mathbf{H}$ remain slightly ill-conditioned. Their singular values are shown in Fig. \ref{fig_Heat_9vetertex_singular}.
The rank-revealing QR method (with column pivoting) is employed again to solve the least-squares problems for $\vec{h}_0$  with the threshold `\textit{tol = 0.0012}' and $\vec{\mu}|_{X\setminus B}$, respectively. The reconstruction and the error of vertex centrality are shown in Fig. \ref{fig_Heat_9vetertex_miu_rec}. The relative reconstruction error is $\mathrm {L_2RNE}=4.570132\%$.

\begin{figure}[htbp]
\centering
\includegraphics[scale=0.56]{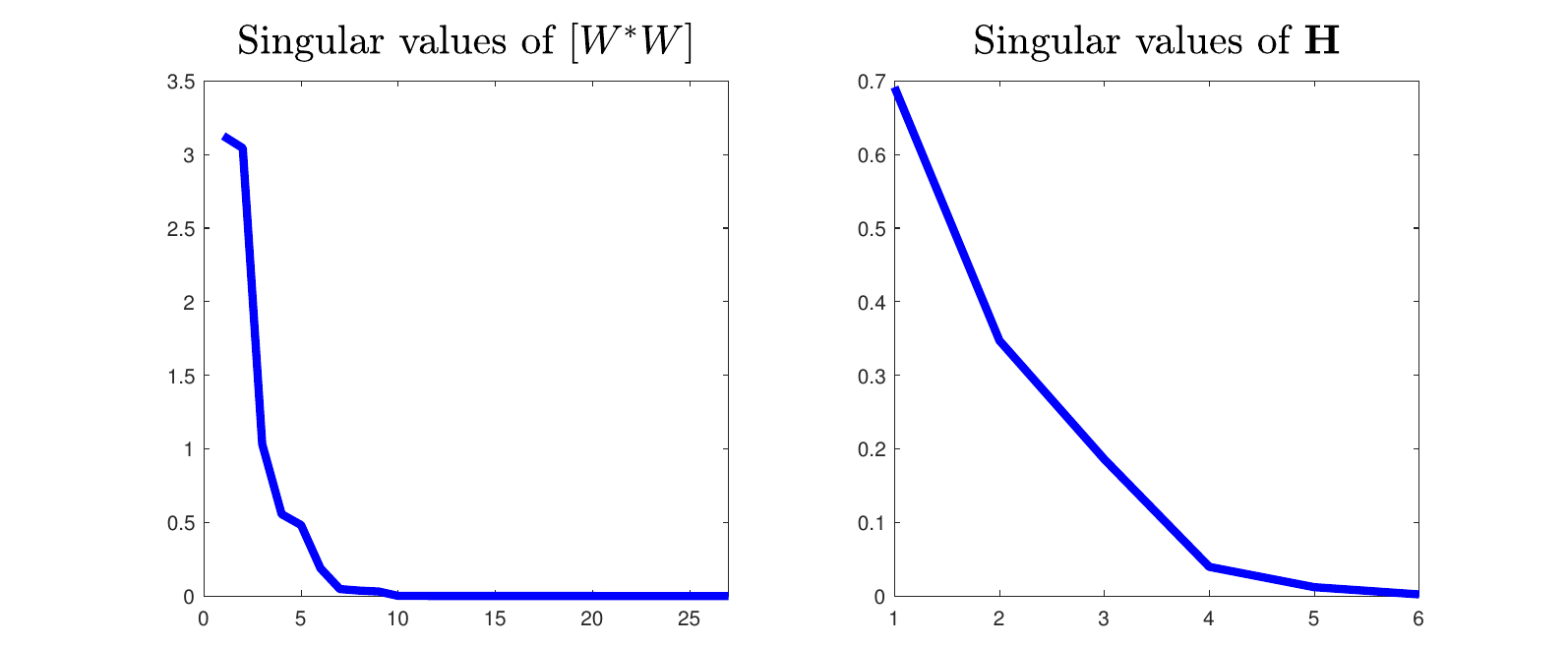}
\caption{\small\small{The singular values of $[W^{*}W]$ and $\mathbf{H}$ in Experiment 2. The minimum singular values are $1.0922*10^{-4}$ and $0.0023$, respectively.}}
\label{fig_Heat_9vetertex_singular}
\end{figure}

\begin{figure}[htbp]
\centering
\includegraphics[scale=0.6]{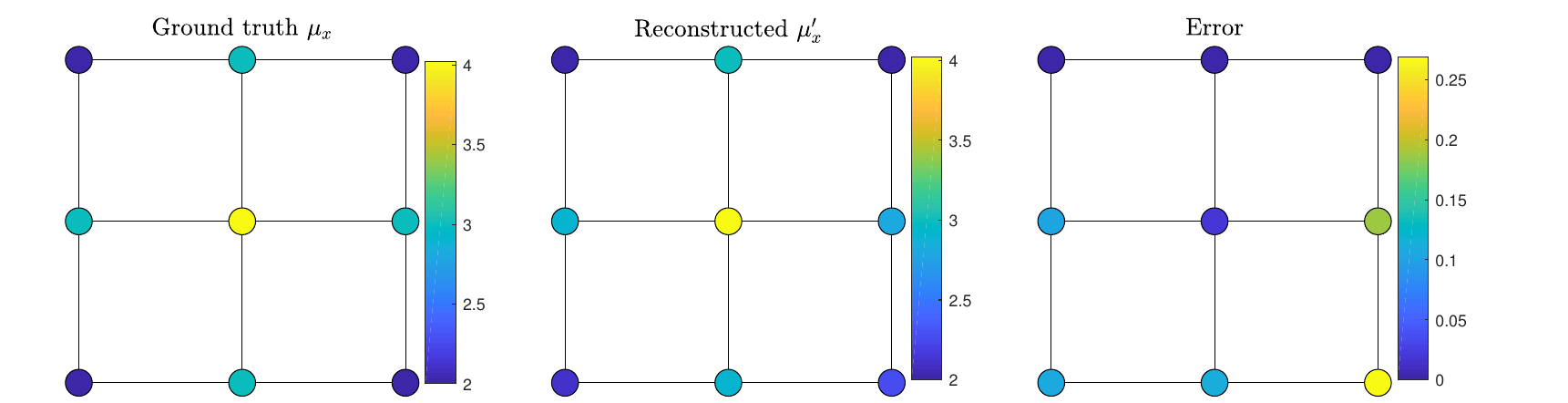}
\caption{\small\small{The ground-truth $\mu_x$, the reconstructed $\mu_x'$ and the absolute errors in Experiment 2.
}}
\label{fig_Heat_9vetertex_miu_rec}
\end{figure}

\begin{appendix}

\section{}\label{Sec:aAppSourceSolutionAdjiont}

In this appendix, we characterize the adjoint of the operator $\Lambda_{\mu}$. For the subset $B \subset X$, the adjoint $\Lambda_{\mu}^*$ acts on functions defined on the space-time domain $\mathbb{Z}_T \times B$.  We demonstrate that its action coincides with the time-reversal operator and  $\Lambda_{\mu}$ itself. To establish this, we define the backward time difference operator $D_t^*$ by
\[ D_t^*v(t,\cdot):= v(t-1,\cdot)-v(t,\cdot)\quad \text{for } ~t\geq 1.\]
Let $U^f$ be the solution to the heat equation \eqref{nonhomog_graph_heat_eq}.
Now consider a function $v$ satisfying 
the following terminal-value problem
\begin{align}\label{Heat_calculateLambdamustar}
\begin{cases}
D_t^* v(t,x) - \Delta v(t,x) = g(t,x), & (t,x)\in \mathbb{Z}_T\setminus\{0\}\times X, \\
v(T-1,x) = 0, & x\in X,
\end{cases}
\end{align}
where the source  $g$ has support contained in $\mathbb{Z}_T\times B$.

Because of the definition $\Lambda_{\mu} f = U^f|_{\mathbb{Z}_T\times B}$,
and $\operatorname{supp}(g) \subset \mathbb{Z}_T \times B$, we extend the domain of integration $\langle g,U^f \rangle$ from $B$ to the entire  set of vertices $X$ without changing the value when $t\in \mathbb{Z}_T$ in the following.
Recall $v$, we have
\begin{align}\label{Al1}
\langle g,\Lambda_{\mu}f\rangle_{\mathbb{Z}_T\times B}
& = \langle g,U^f \rangle_{\mathbb{Z}_T\times B} \nonumber\\
& =  \langle g, U^f \rangle_{\mathbb{Z}_T\times X} \nonumber\\
& = \langle g(0,\cdot),U^f(0,\cdot)\rangle_X + \langle D_t^* v(t,x) - \Delta v(t,x), U^f(t,x)\rangle_{\{1,2,\dots, T-1\}\times X}\nonumber\\
& = \langle v(t-1,x)- v(t,x) - \Delta v(t,x), U^f(t,x)\rangle_{\{1,2,\dots, T-1\}\times X}\nonumber\\
& = \langle v(t-1,x),U^f(t,x))_{\{1,2,\dots, T-1\}\times X}- (v(t,x),U^f(t,x)\rangle_{\{1,2,\dots, T-1\}\times X}\nonumber\\
&\quad - (\Delta v(t,x), U^f(t,x))_{\{1,2,\dots, T-1\}\times X}\nonumber\\
& = \langle v(t,x),U^f(t+1,x))_{\{0,1,\dots, T-2\}\times X}- (v(t,x),U^f(t,x)\rangle_{\{1,2,\dots, T-1\}\times X}\nonumber\\
&\quad - \langle v(t,x), \Delta  U^f(t,x)\rangle_{\{1,2,\dots, T-1\}\times X}\nonumber\\
& = \langle v(t,x),U^f(t+1,x)-U^f(t,x)-\Delta U^f(t,x)\rangle_{\{0,1,\dots, T-1\}\times X}\nonumber\\
&\quad - \langle v(T-1,x), U^f(T,x)\rangle_X\nonumber\\
& = \langle v,f\rangle_{\mathbb{Z}_T\times X}\nonumber\\
& = \langle v,f\rangle_{\mathbb{Z}_T\times B},
\end{align}
where the self-adjoint of $\Delta$ on $X$ is used.

To express this adjoint relationship in explicit form, we recall the time reversal operator $R_{T-1}$, defined by $R_{T-1} u(t, \cdot) = u(T-1-t, \cdot)$ on the discrete time interval $\mathbb{Z}_T$.
Now consider the following initial-value problem
\begin{align}\label{Heat_Timerever_calculateLambdamustar}
\begin{cases}
D_t V(t,x) - \Delta V(t,x) = R_{T-1}g(t,x), & (t,x)\in \mathbb{Z}_T\times X, \\
V(0,x) = 0, & x\in X,
\end{cases}
\end{align}
where the source  $g$ satisfies $\operatorname{supp}(g) \subset \mathbb{Z}_T\times B$.

Let $v$ be the solution to the backward problem \eqref{Heat_calculateLambdamustar}. By comparing the definitions, one observes that $v(t,x) = R_{T-1} V(t,x)$ for all $(t,x)$. Indeed, applying $R_{T-1}$ to the forward problem \eqref{Heat_Timerever_calculateLambdamustar} yields the backward problem for $v$. Consequently, we have 
\begin{align*}
R_{T-1}(\Lambda_{\mu}(R_{T-1}g))=R_{T-1}V|_{\mathbb{Z}_T\times B}=v|_{\mathbb{Z}_T\times B}.
\end{align*}
Substituting this relation into the inner product identity \eqref{Al1} yields
\begin{align*}
\langle g, \Lambda_{\mu} f \rangle_{\{0,1,\dots, T-1\}\times B} 
&= \langle v, f \rangle_{\mathbb{Z}_T\times B} \\
&= \langle R_{T-1}\Lambda_{\mu}R_{T-1}g, f \rangle_{\mathbb{Z}_T\times B}.
\end{align*}
Since this holds for all $g$ and $f$ with support in $\mathbb{Z}_T\times B$, it follows that the equality holds for all test functions $f$ and $g$ with support in $\mathbb{Z}_T\times B$, we conclude that the adjoint of $\Lambda_{\mu}$ on this space is given by
\[
\Lambda_{\mu}^* = R_{T-1} \Lambda_{\mu} R_{T-1}.\]

\end{appendix}

\bibliographystyle{abbrv}

\bibliography{GYL}

\end{document}